\newtheorem{thm}{Theorem}[section]
\newtheorem{lem}[thm]{Lemma}
\newtheorem{prop}[thm]{Proposition}
\newtheorem{preremark}[thm]{Remark}
\newenvironment{remark}%
  {\begin{preremark}\upshape}{\end{preremark}}
\newtheorem{preexample}[thm]{Example}
\newenvironment{example}%
  {\begin{preexample}\upshape}{\end{preexample}}
\numberwithin{equation}{section}
\numberwithin{equation}{section}
\numberwithin{thm}{section}
\newtheorem{lem/defn}[thm]{Lemma/Definition}
\newtheorem{preex/defn}[thm]{Example/Definition}
\newenvironment{ex/defn}%
  {\begin{preex/defn}\upshape}{\end{preex/defn}}
\newcommand{\ten}{\otimes}
\newcommand{\abs}[1]{\lvert#1\rvert}
\begin{document}

\title[The second bosonization of the CKP hierarchy]{The second bosonization of the CKP hierarchy}
\author{Iana I. Anguelova}

\address{Department of Mathematics,  College of Charleston,
Charleston SC 29424 }
\email{anguelovai@cofc.edu}

\subjclass[2010]{81T40,  17B69, 17B68, 81R10}
\date{\today}

\keywords{bosonization, CKP hierarchy, vertex algebra, symplectic fermions}

\begin{abstract}
In this paper we discuss the second bosonization of the Hirota bilinear equation for the CKP hierarchy  introduced in \cite{DJKM6}. We show that there is  a second, untwisted,  Heisenberg action  on the Fock space,  in addition to the  twisted Heisenberg action suggested by \cite{DJKM6} and studied in \cite{OrlovLeur}. We derive the decomposition of  the Fock space into irreducible   Heisenberg modules under this action. We show that the vector space spanned by the  highest weight vectors of the irreducible Heisenberg modules has a structure of a super vertex algebra, specifically the symplectic fermions vertex algebra. We complete the second bosonization of the CKP Hirota equation by expressing the generating field via exponentiated boson vertex operators acting on a polynomial algebra with two infinite sets of variables.
\end{abstract}

\maketitle


\section{Introduction}
\label{sec:intro}

The Kadomtsev-Petviashvili (KP) hierarchy is famously associated with the boson-fermion correspondence, a vertex algebra isomorphism between the charged free fermions super vertex algebra and the  lattice super vertex algebra of the rank one odd lattice (see e.g. \cite{Kac}). One of the aspects of the boson-fermion correspondence is the equivalence between the KP hierarchy of differential equations in the bosonic space and the algebraic Hirota bilinear equation on the fermionic space. Namely, the KP hierarchy can be defined by the following Hirota bilinear equation:
\[
Res_z \Big(\psi^+ (z)\otimes \psi^- (z)\Big) (\tau \ten \tau) =0,
\]
where $\psi^+ (z)$ and  $\psi ^-(z)$ are the two fermionic fields generating the charged free fermions super vertex algebra (following the notation of \cite{Kac}), and $\tau$ is an element of the Fock space of states of this super vertex algebra (the charge 0 subspace, to be exact). But the KP hierarchy is a hierarchy of differential equations, hence to demonstrate the equivalence with the Hirota bilinear approach one needs to bosonize the fields $\psi^+ (z)$ and $\psi ^-(z)$, i.e., write them in terms of bosonic (differential) operators. This  bosonization was one side of the isomorphism known as the boson-fermion correspondence (there is a vast literature on this, as well as other aspects of the boson-fermion correspondence, see e.g. \cite{KacRaina}, \cite{Kac}, \cite{Miwa-book} among many others).

In \cite{DJKM-4} and \cite{DJKM6}    Date, Jimbo, Kashiwara and Miwa  introduced two  new hierarchies related to the KP hierarchy: the BKP and the CKP hierarchies.  As was the case  for the  KP hierarchy as well, the BKP and the CKP hierarchies were initially defined via a Lax form instead of Hirota bilinear equation:
  \[
  \frac{\partial L}{\partial x_n} =[(L^n)_+, L],
  \]
   where $L$ is a certain pseudo-differential operator of the form $L=\partial +u_1(x)\partial^{-1} + u_1(x)\partial^{-2}+\dots $ (see e.g. \cite {DJKM-KP}, \cite{Miwa-book} for details). The connection between the Hirota bilinear equation and the Lax forms is given by
   \[
   u_1 = \frac{\partial ^2}{\partial x_1} \ln \tau (x).
   \]
   Specifically, both the BKP and the CKP hierarchies were defined as  reductions from the KP hierarchy, by assuming conditions on the pseudo-differential operator $L$ used in the Lax form.
  For both of them Date, Jimbo, Kashiwara and Miwa suggested a Hirota bilinear equation, i.e., operator approach.
   The Hirota equation approach  was later completed for the BKP hierarchy (see  \cite{YouBKP}, among others).  There  were no surprises encountered for  the BKP case, and similarly to the KP case   the bosonization of the BKP hierarchy was shown to be  one of the sides of the boson-fermion correspondence of type B (\cite{DJKM-4}, \cite{YouBKP}), which was later interpreted as an isomorphism of certain twisted vertex (chiral)  algebras (\cite{Ang-Varna2}, \cite{AngTVA}).

In \cite{DJKM6}    Date, Jimbo, Kashiwara and Miwa suggested the following Hirota equation for the CKP hierarchy:
\[
Res_z \big(\chi (z)\otimes \chi (-z)\big) (\tau \ten \tau) =0,
\]
where the field $\chi (z)$ is  actually itself bosonic,
with OPE
\[
\chi(z)\chi(w)\sim \frac{1}{z+w}.
\]
Even though the field $\chi (z)$ is bosonic (and thus the algebra generated by its operator coefficients is a Lie algebra, see Section 1), we still need to bosonize it further in terms of  Heisenberg algebra operators, in order to recover the connection with the Lax approach.
In \cite{DJKM6} Date, Jimbo, Kashiwara and Miwa suggested an approach to bosonization, via a twisted Heisenberg field defined by the normal ordered product $\frac{1}{2}:\chi (z) \chi (-z) :$, but did not complete the bosonization.
In \cite{OrlovLeur} van de Leur, Orlov and Shiota  completed this suggested  bosonization and derived further properties and applications.
The CKP hierarchy though held several surprises, with more yet to come perhaps. The most consequential one so far, and the one we address in this paper, is that  the CKP hierarchy admits two different actions of  two different Heisenberg algebras,  one twisted and one untwisted, and thus two bosonizations of the Hirota equation are possible.  The existence of these two different Heisenberg actions  was discovered in \cite{AngMLB}. The twisted Heisenberg algebra was the one used in \cite{OrlovLeur}. In this paper we complete the second bosonization, initiated by the second, untwisted, Heisenberg algebra action. We will study the further properties and applications of this bosonization in a consequent paper, but here in this paper we concentrate on the necessary steps to complete the bosonization.

There are 3 stages to any bosonization:
\begin{enumerate}
\item Construct a bosonic Heisenberg current from the generating fields, hence obtaining a field representation of the Heisenberg algebra on the Fock space;
\item Decompose the Fock space into irreducible Heisenberg modules;
\item Express the original generating fields in terms of exponential boson fields, if possible.
\end{enumerate}
 This paper  completes  these three stages. In Section 1 we introduce the required notation, recall the two Heisenberg actions (further on in this paper we will  only be concerned with the untwisted Heisenberg action), and introduce two necessary gradings. We then follow through with the decomposition of the Fock space into irreducible Heisenberg modules. Herein lies  the second surprise of the CKP hierarchy: although similarly to the KP case there is a charge decomposition of the Fock space (via the charge grading induced by the action of $h_0$, the 0 component of the untwisted Heisenberg field), unlike for the KP Fock space the charge decomposition is not the same as the decomposition into irreducible modules. Specifically, unlike in the KP case, none of the charge components is irreducible as a Heisenberg module. Here indeed the Fock space is  completely reducible, but the vector space spanned by the highest weight vectors of the Heisenberg modules has a much more detailed and fine structure. (This is true for the first bosonization completed in \cite{OrlovLeur} as well). We show in Proposition \ref{prop:HeisDecomp} that the indexing set $\mathfrak{P}_{tdo}$ for the highest weight vectors (and thus for the irreducible Heisenberg modules in the decomposition) consists of the  distinct partitions with a triangular part plus a distinct subpartition of odd half integers, namely:
\[
\mathfrak{P}_{tdo}=\{\mathfrak{p} =(T_m, \lambda_1, \lambda_2, \dots , \lambda_k) \ |\ T_m-\text{triangular\ number}, \ \lambda_1>\lambda_2>\dots >\lambda_k, \lambda_i \in \frac{1}{2}+\mathbb{Z}_{\geq 0}, \ i=1, \dots , k \}.
\]
Further, as we show  in Section 3, the space of  highest weight vectors  has a structure of a super vertex algebra, and  specifically a structure realizing the symplectic fermion vertex super algebra, introduced first in \cite{Curiosities} and \cite{SymplFermionsKausch} (see also triplet vertex algebra, and e.g. \cite{Wangbosonization}, \cite{Abe}, \cite{ADAMOVIC}). Finally, in Section 4, by using the known embedding of the symplectic fermion vertex algebra as a subalgebra (the "small" subalgebra) of  the charged free fermion vertex algebra (following \cite{FMS}, \cite{Curiosities}), and thus via the boson-fermion correspondence into a lattice vertex algebra,  we can express the  generating field $\chi (z)$ via exponentiated boson vertex operators acting on a polynomial algebra with two infinite sets of variables.

\section{Heisenberg action and module decomposition}
\label{Heis-decomposition}

We will use common concepts and technical tools from the areas of vertex algebras and conformal field theory, such as the notions of field, locality, Operator Product Expansions (OPEs), normal ordered products, Wick's Theorem, for which we refer the reader to any book on the topic (such as \cite{FLM} and \cite{Kac}). We will also use  the extension of these technical tools to the case of $N$-point locality, as introduced in \cite{ACJ}.

The starting point is the twisted neutral boson field $\chi (z)$,
\begin{equation}
\chi (z) = \sum _{n\in \mathbb{Z}+1/2} \chi _n z^{-n-1/2}
\end{equation}
with OPE
\begin{equation}
\label{equation:OPE-C}
\chi(z)\chi(w)\sim \frac{1}{z+w}.
\end{equation}
This OPE determines the commutation relations between the modes $\chi_n$, $n\in \mathbb{Z} +1/2$, as \footnote{ We use here an indexing of  the field $\chi (z)$ typical of vertex algebra fields (as opposed to   \cite{DJKM6}, where it was introduced  originally,  or  \cite{OrlovLeur}).  The field  $\chi (z)$ is  related to the double-infinite rank Lie algebra $c_{\infty}$  (see e.g. \cite{WangKac}, \cite{WangDuality}, \cite{ACJ}); consequently it is denoted by $\phi ^C (z)$ in \cite{ACJ}.}:
\begin{equation}
\label{eqn:Com-C}
[\chi_m, \chi_n]=(-1)^{m-\frac{1}{2}}\delta _{m, -n}1.
\end{equation}

The modes of the field $\chi (z)$ form a Lie algebra which we denote by $L_{\chi}$.  Let $\mathit{F_{\chi}}$ be  the Fock module of $L_{\chi}$ with  vacuum vector $|0\rangle $, such that $\chi_n|0\rangle=0 \ \text{for} \ \  n > 0$.
The vector space $\mathit{F_{\chi}}$ has a basis
\begin{equation}
\{\left(\chi _{-j_k}\right)^{m_k}\dots \left(\chi _{-j_2}\right)^{m_2}\left(\chi _{-j_1}\right)^{m_1}|0\rangle \  \arrowvert \ \ j_k>\dots > j_2 > j_1 > 0, \ j_i\in \mathbb{Z}+\frac{1}{2}, \ m_i > 0, m_i\in \mathbb{Z}, \ i=1, 2, \dots, k\}.
\end{equation}
Thus  with our indexing $\mathit{F_{\chi}}$ is isomorphic to  the Fock space $\mathit{F}$ of \cite{OrlovLeur}).

 In \cite{DJKM6} Date, Jimbo, Kashiwara and Miwa introduced the CKP hierarchy through a reduction of the KP hierarchy,  and suggested the following algebraic Hirota bilinear equation associated to it:
\begin{equation} \label{eqn:Hirotaeqn}
Res_z \big(\chi (z)\otimes \chi (-z)\big) (\tau \ten \tau) =0.
\end{equation}
Here $\tau$ is  an element  of the Fock space $\mathit{F_{\chi}}$ (in fact, $\tau$ may need to be an element of a certain completion of $\mathit{F_{\chi}}$, as we will discuss in a consequent paper about the solutions to this Hirota equation).

In order to relate this purely algebraic Hirota equation to a system of differential equations we need to bosonize it. As outlined in the introduction, the bosonization  will proceed in 3 stages. The first surprise presented by the CKP case is that,  as we showed in \cite{AngMLB},  there is a second Heisenberg field generated by the field $\chi (z)$ and its descendant field $\chi (-z)$, and therefore two different  bosonizations of  the algebraic Hirota equation are possible:
\begin{prop}\label{prop:Heis-chi}
I. Let $h_\chi^{\mathbb{Z}+1/2}(z)= \frac{1}{2}:\chi (z)\chi(-z):$. We have $h_\chi^{\mathbb{Z}+1/2}(-z)=h_\chi^{\mathbb{Z}+1/2}(z)$, and we index $h_\chi^{\mathbb{Z}+1/2} (z)$ as
$h_\chi^{\mathbb{Z}+1/2} (z)=\sum _{n\in \mathbb{Z}+1/2} h^{\mathbb{Z}+1/2}_{n} z^{-2n-1}$. The field $h_\chi^{\mathbb{Z}+1/2} (z)$ has OPE with itself given by:
\begin{equation}
\label{eqn:HeisOPEsC-t}
 h_\chi^{\mathbb{Z}+1/2}(z)h_\chi^{\mathbb{Z}+1/2} (w)\sim -\frac{z^2 +w^2}{2(z^2 -w^2)^2}\sim -\frac{1}{4}\frac{1}{(z-w)^2} - \frac{1}{4}\frac{1}{(z+w)^2} ,
\end{equation}
and its  modes, $h^{\mathbb{Z}+1/2}_n, \ n\in \mathbb{Z}+1/2$, generate a \textbf{twisted} Heisenberg algebra $\mathcal{H}_{\mathbb{Z}+1/2}$ with relations \\ $[h^{\mathbb{Z}+1/2}_m, h^{\mathbb{Z}+1/2}_n]=-m\delta _{m+n,0}1$, \ $m,n\in \mathbb{Z}+1/2$.\\
II. Let $h_\chi^{\mathbb{Z}}(z)= \frac{1}{4z}\left(:\chi (z)\chi(z):- :\chi (-z)\chi(-z):\right)$. We have $h_\chi^{\mathbb{Z}}(-z)=h_\chi^{\mathbb{Z}}(z)$, and we index $h_\chi^{\mathbb{Z}} (z)$ as
$h_\chi^{\mathbb{Z}} (z)=\sum _{n\in \mathbb{Z}} h^{\mathbb{Z}}_{n} z^{-2n-2}$. The field $h_\chi^{\mathbb{Z}} (z)$ has OPE with itself given by:
\begin{equation}
\label{eqn:HeisOPEsC-ut}
 h_\chi^{\mathbb{Z}}(z)h_\chi^{\mathbb{Z}} (w)\sim -\frac{1}{(z^2 -w^2)^2},
\end{equation}
and its  modes, $h^{\mathbb{Z}}_n, \ n\in \mathbb{Z}$, generate an \textbf{untwisted} Heisenberg algebra $\mathcal{H}_{\mathbb{Z}}$ with relations $[h^{\mathbb{Z}}_m, h^{\mathbb{Z}}_n]=-m\delta _{m+n,0}1$, \ $m,n\in \mathbb{Z}$.
 \end{prop}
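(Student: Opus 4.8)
The plan is to treat both parts as purely free-field computations, using only the elementary contraction $\langle \chi(z)\chi(w)\rangle = \tfrac{1}{z+w}$ encoded in \eqref{equation:OPE-C} (equivalently the modes \eqref{eqn:Com-C}), together with Wick's theorem for products of normal ordered quadratics and the mode-extraction calculus for $N$-point local fields from \cite{ACJ}. The symmetry statements I would dispose of immediately: since $\chi(z)$ is bosonic its normal ordered products are symmetric, so $h_\chi^{\mathbb{Z}+1/2}(-z)=\tfrac12:\chi(-z)\chi(z):=\tfrac12:\chi(z)\chi(-z):=h_\chi^{\mathbb{Z}+1/2}(z)$, and since $:\chi(-z)\chi(-z):$ and $:\chi(z)\chi(z):$ are interchanged under $z\mapsto-z$ together with the sign of the $\tfrac{1}{4z}$ prefactor, also $h_\chi^{\mathbb{Z}}(-z)=h_\chi^{\mathbb{Z}}(z)$. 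The two indexings are precisely those making every monomial ($z^{-2n-1}$ for $n\in\mathbb{Z}+\tfrac12$, respectively $z^{-2n-2}$ for $n\in\mathbb{Z}$) even in $z$, consistent with this evenness; I would also record the inverse formulas $h^{\mathbb{Z}+1/2}_p=\oint\tfrac{dz}{2\pi i}\,z^{2p}h_\chi^{\mathbb{Z}+1/2}(z)$ and $h^{\mathbb{Z}}_n=\oint\tfrac{dz}{2\pi i}\,z^{2n+1}h_\chi^{\mathbb{Z}}(z)$ for the last step.

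Next I would compute the two OPEs by Wick's theorem. For part I, expanding $:\chi(z)\chi(-z):\,:\chi(w)\chi(-w):$, the four elementary contractions are $\tfrac{1}{z+w}$, $\tfrac{1}{z-w}$, $-\tfrac{1}{z-w}$, $-\tfrac{1}{z+w}$; the two double (complete) contractions then contribute the scalar $c$-number $-\tfrac{1}{(z-w)^2}-\tfrac{1}{(z+w)^2}$, and with the $\tfrac14$ prefactor and a partial-fraction rearrangement this is exactly \eqref{eqn:HeisOPEsC-t}. For part II the analogous double contractions of $(:\chi(z)\chi(z):-:\chi(-z)\chi(-z):)(:\chi(w)\chi(w):-:\chi(-w)\chi(-w):)$ give $\tfrac{4}{(z+w)^2}-\tfrac{4}{(z-w)^2}$, and the $\tfrac{1}{16zw}$ prefactor collapses this (via $\tfrac{1}{(z+w)^2}-\tfrac{1}{(z-w)^2}=-4zw/(z^2-w^2)^2$) to $-\tfrac{1}{(z^2-w^2)^2}$, i.e.\ \eqref{eqn:HeisOPEsC-ut}.

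The one step requiring genuine care — and the step I expect to be the main obstacle — is verifying that the single-contraction terms contribute nothing to the singular part. Each such term is $\tfrac{1}{z\pm w}$ times a surviving normal ordered quadratic, so a priori it carries a pole at $z=w$ or at $z=-w$; the point is that the relevant numerator vanishes on that locus by symmetry of normal ordering (for instance $:\chi(-w)\chi(w):=:\chi(w)\chi(-w):$), so a short Taylor expansion about $z=\pm w$ shows the apparent pole is removable. I would carry this out explicitly, paying attention in part II to the $\tfrac{1}{z}$, $\tfrac{1}{w}$ prefactors so that no spurious pole at the origin is introduced.

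Finally I would read off the commutation relations from the OPEs via the two-point-local commutator formula, inserting the extraction monomials and — this is the essential feature — summing residues at \emph{both} $z=w$ and $z=-w$. For part I, since $2p-1$ is an even integer for $p\in\mathbb{Z}+\tfrac12$ one has $(-w)^{2p-1}=w^{2p-1}$, so the two residues coincide and double up to yield $[h^{\mathbb{Z}+1/2}_p,h^{\mathbb{Z}+1/2}_q]=-p\,\delta_{p+q,0}\,1$; the same bookkeeping gives $[h^{\mathbb{Z}}_m,h^{\mathbb{Z}}_n]=-m\,\delta_{m+n,0}\,1$ in part II, with central element $1$. It is worth noting the conceptual shortcut available for the untwisted case: in the variable $Z=z^2$ the field $h_\chi^{\mathbb{Z}}(z)=\sum_n h^{\mathbb{Z}}_n Z^{-n-1}$ is genuinely one-point local with OPE $-1/(Z-W)^2$, reducing part II to the classical Heisenberg residue computation; for part I the two poles $z=\pm w$ both collapse to $Z=W$, which is exactly why the twisted algebra $\mathcal{H}_{\mathbb{Z}+1/2}$ appears and why the $N$-point locality formalism cannot be bypassed.
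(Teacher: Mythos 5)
Your computation is correct in every detail: the two double contractions producing the central terms $-\tfrac14(z-w)^{-2}-\tfrac14(z+w)^{-2}$ and $-\,(z^2-w^2)^{-2}$, the observation that the single-contraction terms are killed at $z=\pm w$ by the symmetry $:\chi(z)\chi(w):\,=\,:\chi(w)\chi(z):$ (so they are regular and drop out of the singular part), and the mode extraction summing residues at both $z=w$ and $z=-w$, where $(-w)^{2p-1}=w^{2p-1}$ for $p\in\mathbb{Z}+\tfrac12$ makes the two residues add to $-p\,w^{2p-1}$, all check out. The paper itself gives no proof of Proposition \ref{prop:Heis-chi} --- it quotes the result from \cite{AngMLB} --- and your Wick's-theorem argument within the $N$-point locality framework of \cite{ACJ} is precisely the intended one; your $Z=z^2$ shortcut for part II also matches the paper's own later viewpoint, where $h_\chi(z)=\,:\beta_\chi(z^2)\gamma_\chi(z^2):$ is handled as an ordinary one-point-local $\beta\gamma$ current (as in the proof of the lemma establishing $[L_0,h_m]=-mh_m$).
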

The bosonization initiated by the twisted Heisenberg current from the  the above proposition is  studied in \cite{OrlovLeur}. In this paper we study the second bosonization,  initiated by the untwisted Heisenberg current. For simplicity from now on we will denote the untwisted field  $h_\chi^{\mathbb{Z}}(z)$ by $h_\chi(z)$ and its modes by $h_n, \ n\in \mathbb{Z}$.

For the second step in the bosonization process we first need to show that the Heisenberg algebra representation on $\mathit{F_{\chi}}$ is in fact completely reducible.  It is immediate that the representation is a bounded field representation (see e.g. Theorem 3.5 of \cite{Kac}), and we just need to show that $h_0$ is diagonalizable.
 To that effect  we need to introduce various gradings on  $\mathit{F_{\chi}}$. There are at least two types of natural gradings: the first one necessarily derived from the Heisenberg field, specifically from the action of $h_0$, and the second from one of the families of Virasoro fields that we discussed in \cite{AngMLB}.

We first introduce a normal ordered product $:\chi_m \chi_n:$ on the modes $\chi_m$ of the field $\chi (z)$, compatible with the normal ordered product of fields, by:
\begin{equation}
\label{eqn:normord}
:\chi(z)\chi(w): =\sum _{m, n\in \mathbb{Z}+1/2} :\chi_m \chi_n: z^{-m-1/2}w^{-n-1/2}= \sum _{m,n\in \mathbb{Z}} :\chi_{-m-\frac{1}{2}} \chi_{-n-\frac{1}{2}}:z^{m}w^{n},
\end{equation}
and thus for $m, n\in \mathbb{Z}$ this results in the usual "move annihilation operators to the right" approach:
\begin{align*}
:\chi_{-m-\frac{1}{2}}& \chi_{-n-\frac{1}{2}}: =\chi_{-m-\frac{1}{2}}\chi_{-n-\frac{1}{2}} \quad \text{for}\  m+n\neq 1\\
:\chi_{-m-\frac{1}{2}}& \chi_{-n-\frac{1}{2}}: =\chi_{-m-\frac{1}{2}}\chi_{-n-\frac{1}{2}} -(-1)^{m-\frac{1}{2}}=\chi_{-n-\frac{1}{2}}\chi_{-m-\frac{1}{2}} \quad \text{for}\ m+n= -1, n\geq 0,\\
:\chi_{-m-\frac{1}{2}}& \chi_{-n-\frac{1}{2}}: =\chi_{-m-\frac{1}{2}}\chi_{-n-\frac{1}{2}} \quad \text{for}\ m+n= -1, m\geq 0.
\end{align*}
Hence we can express the modes of the field  $h_\chi (z)=\sum _{n\in \mathbb{Z}} h_{n} z^{-2n-2}$ as follows
\begin{equation}
\label{eq:Heismodes}
h_n=\frac{1}{2}\sum _{k\in \mathbb{Z}+1/2}:\chi_{k} \chi_{2n-k}: =\frac{1}{2}\sum _{i\in \mathbb{Z}}:\chi_{-i-\frac{1}{2}} \chi_{2n+ i+\frac{1}{2}}:
\end{equation}
In particular, we have
\begin{equation}
\label{eq:Heismodes}
h_0=\sum _{k\in \mathbb{Z}_{\geq 0}+1/2}:\chi_{-k} \chi_{k}: =\chi_{-\frac{1}{2}}\chi_{\frac{1}{2}} +\chi_{-\frac{3}{2}}\chi_{\frac{3}{2}} +\dots
\end{equation}
Hence it follows that on a monomial $\left(\chi _{-j_k}\right)^{m_k}\dots \left(\chi _{-j_2}\right)^{m_2}\left(\chi _{-j_1}\right)^{m_1}|0\rangle$ in $\mathit{F_{\chi}}$ we have
\begin{equation}
h_0 \Big(\left(\chi _{-j_k}\right)^{m_k}\dots \left(\chi _{-j_2}\right)^{m_2}\left(\chi _{-j_1}\right)^{m_1}|0\rangle \Big)=\Big( \sum_{j_i\in 2\mathbb{Z} +1/2} m_i - \sum_{j_i\in 2\mathbb{Z} -1/2} m_i\Big)\Big(\left(\chi _{-j_k}\right)^{m_k}\dots \left(\chi _{-j_2}\right)^{m_2}\left(\chi _{-j_1}\right)^{m_1}|0\rangle \Big).
\end{equation}
 This shows that $h_0$ is  diagonalizable and thus the Heisenberg algebra representation on $\mathit{F_{\chi}}$ is completely reducible. It also gives $\mathit{F_{\chi}}$ a $\mathbb{Z}$ grading, which we will call \emph{charge} and denote $chg$ (as it is similar to the charge grading in the usual boson-fermion correspondence of type A, i.e., the bosonization related to the KP hierarchy):
 \begin{equation}
chg \big(|0\rangle \big) =0; \quad
chg  \Big(\left(\chi _{-j_k}\right)^{m_k}\dots \left(\chi _{-j_2}\right)^{m_2}\left(\chi _{-j_1}\right)^{m_1}|0\rangle \Big) =\sum_{j_i\in 2\mathbb{Z} +1/2} m_i - \sum_{j_i\in 2\mathbb{Z} -1/2} m_i.
\end{equation}
Example: $chg \big( \chi_{-\frac{1}{2}} |0\rangle \big) =1$; \   $chg \big( \chi_{-\frac{3}{2}} |0\rangle \big) =-1$;\  $chg \big( \chi_{-\frac{3}{2}} \chi_{-\frac{1}{2}} |0\rangle \big) =0$.

Denote the linear span of monomials  of charge $n$ by $\mathit{F^{(n)}_{\chi}}$. The Fock space $\mathit{F_{\chi}}$ has a charge decomposition
\[
\mathit{F_{\chi}} =\oplus_{n\in \mathbb{Z}} \mathit{F^{(n)}_{\chi}}.
\]
In the usual boson-fermion correspondence (of type A), the charge decomposition is in fact the decomposition of the Fock space  in terms of irreducible Heisenberg modules (see e.g. Theorem 5.1 of \cite{Kac}, as well as the more detailed descriptions in \cite{KacRaina}, \cite{Miwa-book}); i.e., each charge component $ \mathit{F^{(n)}_{\chi}}$ is in fact a Heisenberg irreducible module.
This is not the case here: for example, the vector
\[
v_{4;  0} :=\chi_{-\frac{3}{2}}^2 \chi_{-\frac{1}{2}}^2|0\rangle  -2\chi_{-\frac{7}{2}}\chi_{-\frac{1}{2}} |0\rangle +2\chi_{-\frac{5}{2}}\chi_{-\frac{5}{2}}|0\rangle
\]
is of charge 0, but  we can also directly check that $h_n v_{4;  0} =0$ for any $n>0$. Thus $v_{4; 0}$ is another highest weight vector of charge 0 for the action of the Heisenberg algebra, besides the vacuum $|0\rangle $. Therefore the charge 0 component $\mathit{F^{(0)}_{\chi}}$ is not irreducible as a Heisenberg module, in contrast to the usual boson-fermion correspondence (of type A), Similarly, neither are the other charge components, as we shall see.

Next, there is  a   $\frac{1}{2}\mathbb{Z}$  grading on $\mathit{F_{\chi}}$ we will call \emph{degree} and denote by $deg$, which we obtain by using one of the three families of Virasoro fields that were discussed in \cite{AngMLB}. In \cite{AngMLB} we introduced the  descendent fields
$\beta_\chi (z^2), \gamma_\chi (z^2)$  defined by
\begin{equation}
\beta_\chi (z^2) =\frac{\chi(z) -\chi (-z)}{2z}; \quad \gamma_\chi (z^2) =\frac{\chi(z) +\chi (-z)}{2}.
\end{equation}
These fields have  OPEs:
\begin{align}
\beta_\chi(z^2)\beta_\chi(w^2)\sim 0; \quad \gamma_\chi(z^2)\gamma_\chi(w^2)\sim 0; \quad
\beta_\chi(z^2)\gamma_\chi(w^2)\sim \frac{1}{z^2-w^2}; \quad \gamma_\chi(z^2)\beta_\chi(w^2)\sim -\frac{1}{z^2-w^2}.
\end{align}
In particular, we have
\begin{align}
\beta_\chi (z^2) =\sum_{m\in \mathbb{Z}} \chi_{-2m+\frac{1}{2}}(z^2)^{m-1}; \quad \gamma_\chi (z^2) =\sum_{m\in \mathbb{Z}} \chi_{-2m-\frac{1}{2}}(z^2)^{m}
\end{align}
Hence we can translate the following Virasoro field  (\cite{AngMLB}) from the $\beta-\gamma$ system
\begin{equation}
L_2^{\beta\gamma;\  (\lambda, \mu)} (z)=\lambda :\left(\partial_z\beta (z)\right)\gamma (z): +(\lambda +1):\beta (z)\left(\partial_z\gamma (z)\right): +\frac{\mu}{z} :\beta (z)\gamma (z): +\frac{(2\lambda +1)\mu -\mu^2}{2z^2},
\end{equation}
into a Virasoro action on $\mathit{F_{\chi}}$. For simplicity we will consider only the case $\mu =0$, and we have
\begin{equation}
\label{eqn:Vir2}
L^\lambda (z^2)= \sum_{n\in \mathbb{Z}}L_n (z^2)^{-n-2} = -\sum_{n\in \mathbb{Z}}\Big(\sum_{k+l=n}  \left(\lambda (k+1) +(\lambda +1)l\right) :\chi_{2k+\frac{1}{2}}\chi_{2l-\frac{1}{2}}:\Big)(z^2)^{-n-2},
\end{equation}
in particular
\begin{equation}
\label{eqn:VirGradingLambda}
L^\lambda_0 = -\sum_{k\in \mathbb{Z}}\left(\lambda  +k\right) :\chi_{-2k+\frac{1}{2}}\chi_{2k-\frac{1}{2}}:.
\end{equation}
We can further vary $\lambda$ ($\lambda =-\frac{1}{2}$ is usually chosen in conformal field theory), but a useful
choice here is $\lambda =-\frac{1}{4}$. In that case, we have
\begin{equation}
\label{eqn:VirGrading}
L_0 = \frac{1}{2}\Big( \frac{1}{2}:\chi_{-\frac{1}{2}}\chi_{\frac{1}{2}}: -\frac{3}{2}:\chi_{-\frac{3}{2}}\chi_{\frac{3}{2}}: +\frac{5}{2}:\chi_{-\frac{5}{2}}\chi_{\frac{5}{2}}: -\dots \Big).
\end{equation}
Hence
\begin{equation}
L_0 \Big(\left(\chi _{-j_k}\right)^{m_k}\dots \left(\chi _{-j_2}\right)^{m_2}\left(\chi _{-j_1}\right)^{m_1}|0\rangle \Big)= \frac{1}{2}\left(m_k\cdot j_k +\dots m_2\cdot j_2 +m_1\cdot j_1\right)\Big(\left(\chi _{-j_k}\right)^{m_k}\dots \left(\chi _{-j_2}\right)^{m_2}\left(\chi _{-j_1}\right)^{m_1}|0\rangle \Big).
\end{equation}
Discarding the factor of $\frac{1}{2}$ we have the  $deg$ grading on $\mathit{F_{\chi}}$ (also used in
\cite{OrlovLeur}):
 \begin{equation}
deg \big(|0\rangle \big) =0; \quad deg \chi _{-j} |0\rangle =j; \quad
  deg \Big(\left(\chi _{-j_k}\right)^{m_k}\dots \left(\chi _{-j_2}\right)^{m_2}\left(\chi _{-j_1}\right)^{m_1}|0\rangle \Big) =m_k\cdot j_k +\dots m_2\cdot j_2 +m_1\cdot j_1,
\end{equation}
where $j_k>\dots > j_2 >j_1 > 0, \ j_i\in \mathbb{Z}+\frac{1}{2}$,\  $m_i > 0, m_i\in \mathbb{Z}, \ i=1, 2, \dots, k$. Consequently we have a degree decomposition of $\mathit{F_{\chi}}$ as in \cite{OrlovLeur}, which now we know is actually derived from the Virasoro operator component $L_0$. The formal character is given by:
\begin{equation}
dim_q \mathit{F_{\chi}}:= tr_{\mathit{F_{\chi}}}q^{2L_0} =\sum_{k\in \frac{1}{2}\mathbb{Z}} dim \Big(span \{ v\in \mathit{F_{\chi}}\ | \ deg (v) =k \}\Big) q^k.
\end{equation}
We can form also the character with respect to both the $L_0$ and $h_0$ grading operators (they are both diagonalizable):
\begin{equation}
dim_{q,z} \mathit{F_{\chi}}:= tr_{\mathit{F_{\chi}}}q^{2L_0}z^{h_0}.
\end{equation}
Now observing that acting by $\chi _{-2j-\frac{1}{2}}$, $\j\geq 0$, on a monomial $\left(\chi _{-j_k}\right)^{m_k}\dots \left(\chi _{-j_2}\right)^{m_2}\left(\chi _{-j_1}\right)^{m_1}|0\rangle$   will produce a factor of $z^{+1}q^{2j+ \frac{1}{2}}$, and acting by $\chi _{-2j+\frac{1}{2}}$,  $\j\geq 1$ will produce a factor of $z^{-1}q^{2j- \frac{1}{2}}$, it is immediate that
\begin{equation}
\label{eqn:dimzq}
dim_{q,z} \mathit{F_{\chi}}= \frac{1}{\prod_{j\in \mathbb{Z}_{+}} \big(1-zq^{2j- \frac{3}{2}}\big) \big(1-z^{-1}q^{2j- \frac{1}{2}}\big)}.
\end{equation}
The formula
\begin{equation}\label{eqn:gradeddim}
dim_q \mathit{F_{\chi}} =\frac{1}{\prod_{j\in \mathbb{Z}_{+}} \big(1-q^{\frac{2j-1}{2}}\big)}
\end{equation}
of \cite{OrlovLeur} then follows from setting $z=1$ in \eqref{eqn:dimzq}.

\begin{lem} The following relations hold:
\[
[ L_0, h_{m} ] =-m h_m, \quad \forall \ m\in \mathbb{Z},
\]
and thus for any $v\in \mathit{F_{\chi}}$ we have
\begin{equation}
\label{eqn:degreechange}
deg (h_{-m}v) = 2m +deg (v) , \quad \forall \ m\in \mathbb{Z}_+.
\end{equation}
\end{lem}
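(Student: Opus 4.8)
The plan is to prove the commutator relation $[L_0,h_m]=-m h_m$ first and then read off the degree shift \eqref{eqn:degreechange} as an immediate corollary. Since $L_0$ (the zero mode of the Virasoro field $L^\lambda$ at $\lambda=-\tfrac14$) is by construction the grading operator $\tfrac12\,deg$ on $\mathit{F_\chi}$, and since $h_m$ is a normal-ordered quadratic in the modes $\chi_c$, the whole computation reduces to understanding how $\operatorname{ad}_{L_0}=[L_0,\,\cdot\,]$ acts on a single mode $\chi_c$ and then propagating this through the bilinear $h_m$ by the Leibniz rule.

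First I would establish the single-mode relation
\[
[L_0,\chi_c]=-\tfrac{c}{2}\,\chi_c,\qquad c\in\mathbb{Z}+\tfrac12 .
\]
This follows directly from the explicit expansion \eqref{eqn:VirGrading} of $L_0$ together with the commutation relations \eqref{eqn:Com-C}: commuting $\chi_c$ past each summand $:\chi_{-j}\chi_{j}:$ leaves exactly the two terms in which $j=c$ or $j=-c$. The one point requiring care is the sign bookkeeping. Writing $\epsilon_p=(-1)^{p-1/2}$ for the structure constant in \eqref{eqn:Com-C}, one has $\epsilon_{-j}=-\epsilon_j$ for half-integer $j$ (since $(-1)^{-2j}=-1$ when $2j$ is odd), and this sign combines with the alternating coefficients $(-1)^{j-1/2}j$ in \eqref{eqn:VirGrading} to produce the uniform scalar $-\tfrac{c}{2}$ for both creation modes ($c<0$) and annihilation modes ($c>0$). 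This is the step I expect to be the main (if minor) obstacle, precisely because these two sources of signs must cancel correctly to give a single clean eigenvalue.

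Next I would promote this to the quadratic $h_m$. Because $\operatorname{ad}_{L_0}$ is a derivation of the associative operator algebra and $[L_0,\chi_c]$ is a scalar multiple of $\chi_c$ itself, $\operatorname{ad}_{L_0}$ preserves the normal ordering:
\[
[L_0,:\chi_a\chi_b:]=-\tfrac{a+b}{2}\,:\chi_a\chi_b:.
\]
Here no anomalous correction appears, since the normal-ordering constant $:\chi_a\chi_b:-\chi_a\chi_b$ is nonzero only when $a+b=0$, and in that case the prefactor $-\tfrac{a+b}{2}$ vanishes. Applying this to the mode expansion $h_m=\tfrac12\sum_{i\in\mathbb{Z}}:\chi_{-i-1/2}\chi_{2m+i+1/2}:$, every summand has index sum $a+b=2m$, so each is scaled by $-m$, whence $[L_0,h_m]=-m\,h_m$, as claimed.

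Finally, the degree relation \eqref{eqn:degreechange} is immediate. For a degree-homogeneous vector $v\in\mathit{F_\chi}$ we have $L_0 v=\tfrac12\,deg(v)\,v$, so
\[
L_0\,(h_{-m}v)=h_{-m}L_0 v+[L_0,h_{-m}]\,v=\big(\tfrac12\,deg(v)+m\big)h_{-m}v,
\]
giving $deg(h_{-m}v)=deg(v)+2m$ for every $m\in\mathbb{Z}_+$. As a consistency check, the same conclusion is visible directly: each mode $\chi_c$ shifts $deg$ by $-c$, so each summand of $h_{-m}=\tfrac12\sum_{i}:\chi_{-i-1/2}\chi_{-2m+i+1/2}:$, whose two indices sum to $-2m$, raises $deg$ by exactly $2m$.
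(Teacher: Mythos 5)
Your proof is correct, but it takes a genuinely different route from the paper's. The paper argues at the level of fields: it computes, via Wick's theorem in the $\beta\gamma$ picture, the OPE of $L(z^2)=-\frac{1}{4}:\left(\partial_{z^2}\beta(z^2)\right)\gamma(z^2):+\frac{3}{4}:\beta(z^2)\left(\partial_{z^2}\gamma(z^2)\right):$ with $h_\chi(z)=:\beta(z^2)\gamma(z^2):$, and reads off the commutator $[L_0,h_m]=-mh_m$ from the singular part. You instead work purely at mode level: the single-mode relation $[L_0,\chi_c]=-\frac{c}{2}\chi_c$, extracted from \eqref{eqn:VirGrading} and \eqref{eqn:Com-C}, is propagated through the quadratic $h_m$ by the derivation property of $\operatorname{ad}_{L_0}$, with the correct observation that the normal-ordering constant can only interfere when the index sum $a+b$ vanishes, which is exactly where the prefactor $-\frac{a+b}{2}$ kills it. Your sign bookkeeping ($\epsilon_{-j}=-\epsilon_j$ combining with the alternating coefficients $(-1)^{j-1/2}j$ in \eqref{eqn:VirGrading}) checks out in both the creation and annihilation cases, and commuting termwise with the infinite sum defining $L_0$ is legitimate since only the summand with $j=\lvert c\rvert$ contributes. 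As for what each approach buys: the paper's Wick computation delivers the whole OPE $L(z^2)h_\chi(w)$ in one stroke, hence in principle all commutators $[L_n,h_m]$ including any central terms, whereas your argument is more elementary and self-contained (no OPE machinery) but tailored to the zero mode — for general $L_n$ the analogue of your first step, $[L_n,\chi_c]$, is no longer diagonal. Note finally that your own opening remark contains a shortcut that renders the sign computation superfluous: since the paper has already established that $L_0$ acts diagonally as $\frac{1}{2}\,deg$ on the monomial basis and each mode $\chi_c$ shifts $deg$ by $-c$, the relation $[L_0,\chi_c]=-\frac{c}{2}\chi_c$ is immediate, with all signs having been absorbed in that prior diagonalization.
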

\begin{proof} By using the relation with the $\beta\gamma$ system we can calculate the OPE between \\ $L (z^2)=-\frac{1}{4} :\left(\partial_{z^2}\beta (z^2)\right)\gamma (z^2): +\frac{3}{4}:\beta (z^2)\left(\partial_{z^2}\gamma (z^2)\right):$ and $h_\chi(z) =:\beta (z^2)\gamma (z^2):$ via Wick's Theorem.
The calculations are straightforward.\end{proof}

Since the conditions of Theorem 3.5 of \cite{Kac} are satisfied, the Heisenberg module $\mathit{F_{\chi}}$ is completely reducible, and is a direct sum of irreducible highest weight Heisenberg modules, each isomorphic to
\[
\mathbb{C}[h_{-1}, h_{-2}, \dots , h_{-n}, \dots ] \cdot v
\]
for some highest weight vector $v$, for which $h_nv =0$ for any $n>0$. It is a
 well known fact (see e.g. \cite{KacRaina}, \cite{FLM}) that any irreducible highest weight module of the Heisenberg
algebra $\mathcal{H}_{\mathbb{Z}}$ introduced in Section 2 is isomorphic to the polynomial algebra with infinitely many variables $\mathit{B_\lambda}\cong \mathbb{C}[x_1,
x_2, \dots , x_n, \dots ]$ where $v\mapsto 1$ and:
\begin{equation}
h_n\mapsto i\partial _{x_{n}}, \quad h_{-n} \mapsto
inx_n\cdot, \quad \text{for any} \ \ n\in \mathbb{N}, \quad h_0\mapsto \lambda\cdot , \ \lambda \in \mathbb{C}.
\end{equation}
In fact we can introduce an arbitrary re-scaling $s_n\neq 0, \ s_n\in \mathbb{C}$, for $n\neq 0$ only, so that
\begin{equation}
h_n\mapsto is_n\partial _{x_{n}}, \quad h_{-n} \mapsto
is_n^{-1}nx_n\cdot, \quad \text{for any} \ \ n\in \mathbb{N}, \quad h_0\mapsto \lambda\cdot.
\end{equation}
Thus each of the irreducible modules in our Heisenberg decomposition is isomorphic to   $\mathit{B_\lambda}\cong \mathbb{C}[x_1,
x_2, \dots , x_n, \dots ]$ for some $\lambda \in \mathbb{C}$ determined by the charge of the highest weight vector generating the module.
Now if $v$ is a highest weight vector, which induces an irreducible module $V =\mathbb{C}[h_{-1}, h_{-2}, \dots , h_{-n}, \dots ] \cdot v\cong \mathit{B_\lambda}$, then  as a consequence of \eqref{eqn:degreechange} $V$ has graded dimension
\[
dim_q V =\frac{q^{deg (v)}}{\prod_{i\in \mathbb{Z}_{+}} (1-q^{2i})}.
\]
Since $\mathit{F_{\chi}}$ is a direct sum of such irreducible modules, we have
\[
dim_q \mathit{F_{\chi}} =\frac{\sum_{\mathfrak{p} \in \mathfrak{P}_{tdo}} q^{deg (v_{\mathfrak{p})}}}{\prod_{i\in \mathbb{Z}_{+}} (1-q^{2i})},
\]
where the summation is over an as yet unknown indexing set $\mathcal{P}_{tdo}$. By comparing this formula for the graded dimension with \eqref{eqn:gradeddim}, we have
\[
\sum_{\mathfrak{p}\in \mathfrak{P}_{tdo}} q^{deg (v_{\mathfrak{p})}} =\frac{\prod_{i\in \mathbb{Z}_{+}} (1-q^{2i})}{\prod_{i\in \mathbb{Z}_{+}} (1-q^{\frac{2i-1}{2}})} = \frac{\prod_{i\in \mathbb{Z}_{+}} (1-q^{2i})\prod_{i\in \mathbb{Z}_{+}} (1+q^{\frac{2i-1}{2}})}{\prod_{i\in \mathbb{Z}_{+}} (1-q^{2i-1})}
\]
Now using the Jacobi triple identity in one of its forms:
\begin{equation}
\label{eqn:Jacid}
\prod_{i=1}^{\infty} (1-q^{i})(1+zq^{i-1})(1+z^{-1}q^{i})=\sum_{m\in \mathbb{Z}}z^m q^{\frac{m(m-1)}{2}}.
\end{equation}
we have, by  setting $z=1$,
\[
2\sum_{m\in \mathbb{Z}_{\geq 0}}q^{T_m} =2\prod_{i=1}^{\infty} (1-q^{2i})(1+q^{i})= 2\frac{\prod_{i=1}^{\infty} (1-q^{2i})(1-q^{2i-1})(1+q^{i})}{\prod_{i=1}^{\infty} (1-q^{2i-1})} =2 \frac{\prod_{i=1}^{\infty} (1-q^{i})(1+q^{i})}{\prod_{i=1}^{\infty} (1-q^{2i-1})}
\]
where $T_m$ denotes the $m$-th triangular number---  $T_m:=1+2+\dots +m =\frac{m(m+1)}{2}$, with $T_0 =0$. Hence by necessity we re-derived a known\footnote{We couldn't find a reference for this formula.} formula for the triangular numbers:
\[
\sum_{m\in \mathbb{Z}_{\geq 0}}q^{T_m} =1+q +q^3 +q^6 +q^{10} +\dots + q^{T_m} +\dots = \frac{\prod_{i=1}^{\infty} (1-q^{2i})}{\prod_{i=1}^{\infty} (1-q^{2i-1})}.
\]
Using this formula, we have
\begin{equation}
\sum_{\mathfrak{p}\in \mathfrak{P}_{tdo}} q^{deg (v_{\mathfrak{p})}} = \Big(\sum_{m\in \mathbb{Z}_{\geq 0}}q^{T_m}\Big)\cdot \prod_{i\in \mathbb{Z}_{+}} (1+q^{\frac{2i-1}{2}}).
\end{equation}
Since the right-hand side is now  a sum with positive coefficients, it determines the indexing set $\mathfrak{P}_{tdo}$, namely it consists of distinct partitions of the type
\begin{equation}
\mathfrak{P}_{tdo}=\{\mathfrak{p} =(T_m, \lambda_1, \lambda_2, \dots , \lambda_k) \ |\ T_m-\text{triangular\ number}, \ \lambda_1>\lambda_2>\dots >\lambda_k, \lambda_i \in \frac{1}{2}+\mathbb{Z}_{\geq 0}, \ i=1, \dots , k \}.
\end{equation}
As usual, the weight $| \mathfrak{p} | $   of such a partition $\mathfrak{p}$ is the sum of its parts, $| \mathfrak{p} |: =T_m +  \lambda_1 + \lambda_2 + \dots + \lambda_k$.

Hence we arrive at  the following proposition, which provides the decomposition of $\mathit{F_{\chi}}$ into irreducible Heisenberg modules, thus completing the second step in the process of bosonization:
\begin{prop} \label{prop:HeisDecomp}
For the action of the Heisenberg algebra $\mathcal{H}_{\mathbb{Z}}$ on $\mathit{F_{\chi}}$,  the number of highest weight vectors  of degree $n\in \frac{1}{2}\mathbb{Z}$  equals the number of partitions $\mathfrak{p}\in \mathfrak{P}_{tdo}$ of weight $n$. Thus as Heisenberg modules
\begin{equation}
\mathit{F_{\chi}}\cong  \oplus_{\mathfrak{p}\in \mathfrak{P}_{tdo}}   \mathbb{C}[x_1, x_2, \dots , x_n, \dots ].
\end{equation}
\end{prop}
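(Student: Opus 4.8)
The plan is to prove the statement entirely at the level of graded characters, because the structural input is already in hand: by the diagonalizability of $h_0$ and Theorem 3.5 of \cite{Kac} the module $\mathit{F_{\chi}}$ is completely reducible, and each irreducible summand is isomorphic as an $\mathcal{H}_{\mathbb{Z}}$-module to the polynomial realization $\mathit{B_\lambda}\cong \mathbb{C}[x_1, x_2, \dots]$. So the only thing left to pin down is the \emph{number} of irreducible summands in each degree, equivalently the number of highest weight vectors of each degree, and a character identity will deliver exactly this count.

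First I would use complete reducibility to write $\mathit{F_{\chi}} \cong \bigoplus_v \mathbb{C}[h_{-1}, h_{-2}, \dots]\cdot v$, the sum ranging over a homogeneous basis of highest weight vectors $v$ (those with $h_n v = 0$ for all $n>0$). Each summand is a free module on the single generator $v$, so the Lemma, which gives $\deg(h_{-m}v) = 2m + \deg(v)$ via \eqref{eqn:degreechange}, shows it has graded dimension $q^{\deg(v)}/\prod_{i\geq 1}(1-q^{2i})$. Summing over summands yields $\dim_q \mathit{F_{\chi}} = N(q)/\prod_{i\geq 1}(1-q^{2i})$, where $N(q):=\sum_v q^{\deg(v)}$ is by construction the generating function whose coefficient of $q^n$ is the number of highest weight vectors of degree $n$ (finite, since each degree space of $\mathit{F_{\chi}}$ is finite dimensional).

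Next I would identify $N(q)$ explicitly. Solving gives $N(q) = \dim_q \mathit{F_{\chi}}\cdot \prod_{i\geq 1}(1-q^{2i})$; substituting the product formula \eqref{eqn:gradeddim} for $\dim_q \mathit{F_{\chi}}$, rationalizing with $(1-q^{\frac{2i-1}{2}})(1+q^{\frac{2i-1}{2}}) = 1-q^{2i-1}$, and applying the Jacobi triple product identity \eqref{eqn:Jacid} at $z=1$ to extract the triangular-number series $\sum_{m\geq 0}q^{T_m} = \prod_i(1-q^{2i})/\prod_i(1-q^{2i-1})$, I would rewrite it as
\[
N(q) = \Big(\sum_{m\geq 0} q^{T_m}\Big)\cdot \prod_{i\geq 1}\big(1 + q^{\frac{2i-1}{2}}\big).
\]
The right-hand side is then read off combinatorially: $\sum_m q^{T_m}$ contributes a single triangular part $T_m$, and $\prod_i(1+q^{\frac{2i-1}{2}})$ is the generating function for distinct subpartitions $\lambda_1>\dots>\lambda_k$ with $\lambda_i\in\tfrac12+\mathbb{Z}_{\geq 0}$. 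Hence the coefficient of $q^n$ counts precisely the partitions $\mathfrak{p}\in\mathfrak{P}_{tdo}$ of weight $n$.

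The main obstacle — the only step that is not purely formal — is the positivity/uniqueness argument bridging characters and the actual count: a character identity only equates dimensions, so I must argue that because $N(q)$ has nonnegative integer coefficients with the explicit combinatorial meaning above, and because the number of highest weight vectors in each degree is a finite nonnegative integer, the two integers must agree degree by degree. Once this count equality is secured, the module isomorphism $\mathit{F_{\chi}}\cong \bigoplus_{\mathfrak{p}\in\mathfrak{P}_{tdo}}\mathbb{C}[x_1, x_2, \dots]$ is immediate, since each irreducible summand is an $\mathcal{H}_{\mathbb{Z}}$-module of the stated polynomial type and we have just counted the summands correctly.
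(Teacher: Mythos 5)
Your proposal is correct and takes essentially the same route as the paper: complete reducibility via the diagonalizability of $h_0$ and Theorem 3.5 of \cite{Kac}, the graded dimension $q^{\deg(v)}/\prod_{i\geq 1}(1-q^{2i})$ of each irreducible summand obtained from \eqref{eqn:degreechange}, and the character computation combining \eqref{eqn:gradeddim} with the Jacobi triple product \eqref{eqn:Jacid} at $z=1$ to identify the highest weight vector generating function as $\bigl(\sum_{m\geq 0}q^{T_m}\bigr)\prod_{i\geq 1}\bigl(1+q^{\frac{2i-1}{2}}\bigr)$. The positivity step you single out as the main obstacle is in fact automatic (and treated as such in the paper): $N(q)$ is by definition the generating function of the finite nonnegative highest-weight-vector counts, so the formal power series identity in $q^{1/2}$ equates these counts with the partition counts of $\mathfrak{P}_{tdo}$ coefficient by coefficient.
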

\begin{example}
We can calculate the highest weight vectors of given  degree by brute force. For the fist few degrees we have
\begin{align*}
\sum_{\mathfrak{p}\in \mathfrak{P}_{tdo}} q^{deg (v_{\mathfrak{p})}} &= \Big(\sum_{m\in \mathbb{Z}_{\geq 0}}q^{T_m}\Big)\cdot \prod_{i\in \mathbb{Z}_{+}} (1+q^{\frac{2i-1}{2}})\\
& =\left(1+q+q^3 +q^6 +\dots\right)\left(1+q^{\frac{1}{2}} + q^{\frac{3}{2}} +q^2 +q^{\frac{5}{2}} +q^3 +q^{\frac{7}{2}} +2q^4 +2q^{\frac{9}{2}} +\dots\right)\\
& =1+ q^{\frac{1}{2}} + q + 2q^{\frac{3}{2}} + q^2 + 2q^{\frac{5}{2}} +3q^3 +3q^{\frac{7}{2}} +3q^4 +4q^{\frac{9}{2}} +\dots
\end{align*}
The corresponding highest weight vectors are (in each degree the maximum charge of the highest weight vectors starts at twice that degree,  and also the charges inside each degree are equivalent modulo 4):
\begin{displaymath}
\begin{array}{c|c}
1 & |0\rangle \hspace{\stretch{1}}\\
q^{\frac{1}{2}} & \chi_{-\frac{1}{2}}|0\rangle  \hspace{\stretch{1}}\\
q & \chi_{-\frac{1}{2}}^2|0\rangle  \hspace{\stretch{1}}\\
2q^{\frac{3}{2}} & \chi_{-\frac{1}{2}}^3|0\rangle ;\ \  \chi_{-\frac{3}{2}}|0\rangle  \hspace{\stretch{1}}\\
q^2 & \chi_{-\frac{1}{2}}^4|0\rangle   \hspace{\stretch{1}}\\
2q^{\frac{5}{2}} & \chi_{-\frac{1}{2}}^5|0\rangle ; \ \ \chi_{-\frac{3}{2}} \chi_{-\frac{1}{2}}^2|0\rangle  +2\chi_{-\frac{5}{2}}|0\rangle  \hspace{\stretch{1}}\\
3q^3 & \chi_{-\frac{1}{2}}^6|0\rangle ;\ \ \chi_{-\frac{3}{2}}\chi_{-\frac{1}{2}}^3|0\rangle +3 \chi_{-\frac{5}{2}}\chi_{-\frac{1}{2}}|0\rangle; \ \  \chi_{-\frac{3}{2}}^2|0\rangle  \hspace{\stretch{1}}\\
3q^{\frac{7}{2}} & \chi_{-\frac{1}{2}}^7|0\rangle ;  \ \ \chi_{-\frac{3}{2}}\chi_{-\frac{1}{2}}^4|0\rangle +4 \chi_{-\frac{5}{2}}\chi_{-\frac{1}{2}}^2|0\rangle; \ \ \chi_{-\frac{3}{2}}^2 \chi_{-\frac{1}{2}}|0\rangle  -2\chi_{-\frac{7}{2}}|0\rangle   \hspace{\stretch{1}}\\
3q^4 & \chi_{-\frac{1}{2}}^8|0\rangle ; \ \ \chi_{-\frac{3}{2}} \chi_{-\frac{1}{2}}^5|0\rangle  +5\chi_{-\frac{5}{2}}\chi_{-\frac{1}{2}}^3|0\rangle ; \ \ \chi_{-\frac{3}{2}}^2\chi_{-\frac{1}{2}}^2|0\rangle -2 \chi_{-\frac{7}{2}}\chi_{-\frac{1}{2}}|0\rangle +2 \chi_{-\frac{5}{2}}\chi_{-\frac{3}{2}}|0\rangle  \hspace{\stretch{1}}\\
4q^{\frac{9}{2}} & \chi_{-\frac{1}{2}}^9|0\rangle ; \ \ \chi_{-\frac{3}{2}} \chi_{-\frac{1}{2}}^6|0\rangle  +6\chi_{-\frac{5}{2}}\chi_{-\frac{1}{2}}^4|0\rangle ; \ \ \chi_{-\frac{3}{2}}^2\chi_{-\frac{1}{2}}^3|0\rangle +6 \chi_{-\frac{9}{2}}|0\rangle +6 \chi_{-\frac{5}{2}}\chi_{-\frac{3}{2}}\chi_{-\frac{1}{2}}|0\rangle  ;\ \ \chi_{-\frac{3}{2}}^3|0\rangle \hspace{\stretch{1}}
 \end{array}
 \end{displaymath}
 \end{example}
There are several families of highest weight vectors, for instance one can easily check that
$ \chi_{-\frac{1}{2}}^n|0\rangle$, \ $\chi_{-\frac{3}{2}}^n|0\rangle$ and $ \chi_{-\frac{3}{2}}\chi_{-\frac{1}{2}}^{n+2}|0\rangle +(n+2) \chi_{-\frac{5}{2}}\chi_{-\frac{1}{2}}^{n}|0\rangle$ are highest weight vectors for any $n\geq 0$. Also, observe that at any given weight  $\frac{n}{2}$ for $n\in \mathbb{Z}$,
$\chi_{-\frac{1}{2}}^{n}|0\rangle$ is the highest weight vector of the highest  charge ($n$) with that degree.
\begin{remark} It would be interesting to derive a formula giving a correspondence between a partition $\mathfrak{p}\in \mathfrak{P}_{tdo}$ of weight $n$ and the highest weight vector corresponding to that partition, or even the charge of that highest weight vector. As the weights of the partitions grow, the charges are less straightforward to calculate.  For example at weight $\frac{13}{2}$ there are 7 partitions from $\mathfrak{P}_{tdo}$ and one can calculate by brute force that there is a highest weight vector of charge 13, a highest weight vector of charge 9, two highest weight vectors of charge 5,  two highest weight vectors of charge 1 and  a highest weight vector of charge $-3$.
\end{remark}

Denote by \textbf{$\mathit{F^{hwv}_\chi}$} the vector space spanned by all the highest weight vectors for the Heisenberg action.  To accomplish the third step in the bosonization process,  in the next section we will first  show that $\mathit{F^{hwv}_\chi}$ has  a structure realizing  the  symplectic fermion  super vertex algebra.

\section{Symplectic fermions: vertex algebra structure on the space spanned by the  Heisenberg highest weight vectors}
\label{HighestWeightVectors}

As usual, for a rational function $f(z,w)$,  with poles only at $z=0$,  $z=\pm w$, we denote by $i_{z,w}f(z,w)$
the expansion of $f(z,w)$ in the region $\abs{z}\gg \abs{w}$ (the region in the complex $z$ plane outside   the points $z=0, \pm w$), and correspondingly for
$i_{w,z}f(z,w)$.
\begin{lem}\label{lem:Heisbeta} The following OPEs  hold:
\begin{equation}
h_\chi(z)\beta_\chi(w^2) \sim -\frac{1}{z^2-w^2} \beta_\chi(w^2); \quad h_\chi(z)\gamma_\chi(w^2)\sim \frac{1}{z^2-w^2} \gamma_\chi(w^2)
\end{equation}
\end{lem}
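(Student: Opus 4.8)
The plan is to sidestep the modes $h_n=\tfrac12\sum_k :\chi_k\chi_{2n-k}:$ entirely and instead work with the realization of $h_\chi(z)$ as the current of the auxiliary $\beta\gamma$-system. Recall from the proof of the preceding lemma that $h_\chi(z)=:\beta_\chi(z^2)\gamma_\chi(z^2):$; one checks this by substituting the definitions of $\beta_\chi,\gamma_\chi$ and using that the bosonic normal ordered product is symmetric, so that the cross terms $:\chi(z)\chi(-z):$ and $:\chi(-z)\chi(z):$ cancel and leave exactly $\frac{1}{4z}\left(:\chi(z)\chi(z): - :\chi(-z)\chi(-z):\right)$. Since $\beta_\chi(z^2)$ and $\gamma_\chi(z^2)$ are mutually local fields in the variable $z^2$ with the simple two-point contractions already recorded, computing the OPE of $h_\chi$ with either of them becomes a one-line application of Wick's theorem.

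First I would treat $\beta_\chi(w^2)$. Applying Wick's theorem to the product of the normal ordered pair $:\beta_\chi(z^2)\gamma_\chi(z^2):$ with the single field $\beta_\chi(w^2)$ produces exactly two terms, one contracting each factor (no double contraction, and no signs since $\beta_\chi,\gamma_\chi$ are even):
\[
h_\chi(z)\beta_\chi(w^2) \sim \langle\beta_\chi(z^2)\beta_\chi(w^2)\rangle\,\gamma_\chi(z^2) + \beta_\chi(z^2)\,\langle\gamma_\chi(z^2)\beta_\chi(w^2)\rangle,
\]
where $\langle\cdot\rangle$ denotes the singular part. Because $\beta_\chi(z^2)\beta_\chi(w^2)\sim 0$ the first term drops, and with $\langle\gamma_\chi(z^2)\beta_\chi(w^2)\rangle=-\frac{1}{z^2-w^2}$ I am left with $-\frac{1}{z^2-w^2}\beta_\chi(z^2)$. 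The computation for $\gamma_\chi(w^2)$ is the mirror image: now the surviving contraction is $\langle\beta_\chi(z^2)\gamma_\chi(w^2)\rangle=\frac{1}{z^2-w^2}$ while the $\gamma$--$\gamma$ contraction vanishes, yielding $\frac{1}{z^2-w^2}\gamma_\chi(z^2)$.

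The last step is to replace the argument $z^2$ by $w^2$ in the surviving field. Expanding $\beta_\chi(z^2)=\beta_\chi(w^2)+(z^2-w^2)\partial_{w^2}\beta_\chi(w^2)+\cdots$ and multiplying by the simple pole $\frac{1}{z^2-w^2}$, only the leading term stays singular, so $-\frac{1}{z^2-w^2}\beta_\chi(z^2)\sim-\frac{1}{z^2-w^2}\beta_\chi(w^2)$ modulo regular terms, and likewise for $\gamma_\chi$; this delivers the two OPEs in the stated form. The one point requiring care---and the only real subtlety---is that $\beta_\chi$ and $\gamma_\chi$ are fields in $z^2$ rather than $z$, so all contractions, the expansions $i_{z,w}$, and Wick's theorem itself must be read inside the $N$-point-local framework of \cite{ACJ}. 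Once that framework is in place I expect no genuine obstacle beyond careful bookkeeping of the $z^2$ arguments.
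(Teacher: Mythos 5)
Your proof is correct and matches the paper's approach: the paper proves this lemma ``by direct application of Wick's Theorem,'' using exactly the identity $h_\chi(z)=:\beta_\chi(z^2)\gamma_\chi(z^2):$ that it records in the proof of the preceding lemma, together with the $\beta_\chi$--$\gamma_\chi$ contractions. Your verification of that identity (cancellation of the cross terms by symmetry of the bosonic normal ordered product) and your Taylor-expansion step replacing $\beta_\chi(z^2)$, $\gamma_\chi(z^2)$ by their values at $w^2$ are precisely the bookkeeping the paper leaves implicit.
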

\begin{proof} By direct application of Wick's Theorem. \end{proof}
Denote
\begin{equation}
V^-(z) = \exp \Big(-\sum_{n>0}\frac{1}{n}h_n z^{-2n}\Big);\quad \quad
V^+(z) = \exp \Big(\sum_{n>0}\frac{1}{n}h_{-n} z^{2n}\Big)
\end{equation}
Consequently, we will write
\begin{equation}
V^-(z)^{-1} = \exp \Big(\sum_{n>0}\frac{1}{n} h_n z^{-2n}\Big);\quad \quad
V^+(z)^{-1} = \exp \Big(-\sum_{n>0}\frac{1}{n} h_{-n} z^{2n}\Big)
\end{equation}
\begin{lem} The following commutation relations hold:
\label{lem:comrelV}
\begin{align}
&[h_\chi(z), V^-(w) ]  =  -\Big(\sum_{n>0}\frac{z^{2n-2}}{w^{2n}}\Big) V^-(w) = -i_{w, z}\frac{1}{w^2 -z^2}V^-(w) =i_{w, z}\frac{1}{z^2 -w^2}V^-(w) ;\\
&[h_\chi(z), V^+(w) ]  = -\Big(\sum_{n>0}\frac{w^{2n}}{z^{2n+2}}\Big) V^+(w) =-i_{z, w}\frac{w^2}{z^2(z^2 -w^2)}V^+(w); \\
&V^-(z) V^+(w) = i_{z, w}\frac{z^2}{z^2 -w^2}V^+(w)V^-(z); \\
&V^-(z)^{-1} V^+(w)^{-1} = i_{z, w}\frac{z^2}{z^2 -w^2}V^+(w)^{-1}V^-(z)^{-1}; \\
&V^-(z)^{-1} V^+(w) = \frac{z^2 -w^2}{z^2}V^+(w)V^-(z)^{-1}; \\
&V^-(z) V^+(w)^{-1} = \frac{z^2 -w^2}{z^2}V^+(w)^{-1}V^-(z).
\end{align}
\end{lem}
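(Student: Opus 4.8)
The plan is to reduce all six identities to standard manipulations of exponentials of Heisenberg elements, exploiting the fact that every commutator that arises is central, i.e.\ a scalar multiple of the identity. Write $V^-(z) = \exp(A)$ and $V^+(w) = \exp(B)$ with $A = -\sum_{m>0}\frac{1}{m}h_m z^{-2m}$ and $B = \sum_{n>0}\frac{1}{n}h_{-n}w^{2n}$, so that $V^-(z)\inv = \exp(-A)$ and $V^+(w)\inv = \exp(-B)$. The only input is the commutator $[h_m, h_n] = -m\delta_{m+n,0}1$ from Proposition \ref{prop:Heis-chi}.

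For the first two relations, I would first compute the single-mode commutators $[h_m, V^\pm(w)]$. From the Heisenberg relations, $[h_m, A]$ vanishes unless $m < 0$ and $[h_m, B]$ vanishes unless $m > 0$, and in each surviving case the bracket is a scalar; explicitly $[h_{-k}, A] = -w^{-2k}$ (for $k>0$) and $[h_m, B] = -w^{2m}$ (for $m>0$). Since these brackets are central, the elementary identity $[h_m, e^X] = [h_m, X]e^X$ applies, giving $[h_{-k}, V^-(w)] = -w^{-2k}V^-(w)$ and $[h_m, V^+(w)] = -w^{2m}V^+(w)$. Assembling $h_\chi(z) = \sum_n h_n z^{-2n-2}$ then produces exactly the two geometric series in the statement, and recognizing each as the expansion of $\frac{1}{w^2-z^2}$ (resp.\ $\frac{w^2}{z^2(z^2-w^2)}$) in the region $\abs{w}\gg\abs{z}$ (resp.\ $\abs{z}\gg\abs{w}$) gives the closed forms with $i_{w,z}$ and $i_{z,w}$.

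For the four exchange relations I would compute the single scalar $[A, B]$, which a direct summation shows equals $\sum_{m>0}\frac{1}{m}\left(\frac{w^2}{z^2}\right)^m = -\ln\left(1 - \frac{w^2}{z^2}\right)$, so that $e^{[A,B]} = i_{z,w}\frac{z^2}{z^2-w^2}$. Because $[A,B]$ is central, the Baker--Campbell--Hausdorff product formula $e^P e^Q = e^Q e^P e^{[P,Q]}$ applies for every choice $P, Q \in \{\pm A, \pm B\}$. Taking $(P,Q)=(A,B)$ yields the third relation and $(P,Q)=(-A,-B)$ the fourth (the bracket is unchanged under negating both arguments), while $(P,Q)=(-A,B)$ and $(P,Q)=(A,-B)$ each negate the bracket, replacing $e^{[A,B]}$ by $e^{-[A,B]} = 1 - \frac{w^2}{z^2} = \frac{z^2-w^2}{z^2}$, which are the last two relations.

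The computations themselves are routine; the real care is in the bookkeeping. I must carry the sign and the factor $-m$ in $[h_m,h_n]=-m\delta_{m+n,0}$ consistently through every sum, and---more delicately---keep the expansion directions straight, expanding each geometric series only in the annulus where it converges and assigning $i_{z,w}$ versus $i_{w,z}$ accordingly. The asymmetry between the third and fourth relations (which retain the $i_{z,w}$ expansion of $\frac{z^2}{z^2-w^2}$) and the fifth and sixth (whose prefactor $1-\frac{w^2}{z^2}$ terminates and so needs no expansion prescription) is precisely this bookkeeping, and is the place where any sign or region error would most easily surface.
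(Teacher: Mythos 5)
Your proposal is correct and follows essentially the same route as the paper: direct mode-by-mode computation (using that $[h_m,h_n]=-m\delta_{m+n,0}$ makes all brackets central, so $[h_m,e^X]=[h_m,X]e^X$) for the first two relations, and the Baker--Campbell--Hausdorff identity $e^Pe^Q=e^Qe^Pe^{[P,Q]}$ with the single scalar $[A,B]=-\ln\bigl(1-\tfrac{w^2}{z^2}\bigr)$ for the remaining four. Your sign bookkeeping, the choice of expansion regions $i_{z,w}$ versus $i_{w,z}$, and the observation that the fifth and sixth prefactors terminate all check out against the paper's computation.
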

\begin{proof} The proof is by direct calculation on the first two relations. On the other four we apply the Baker-Campbell-Hausdorff formula, we will only show it for one of the relations:
\begin{align*}
V^-(z) V^+(w) &=\exp\left(-[\sum_{m>0}\frac{1}{m} h_m z^{-2m}, \sum_{n>0}\frac{1}{n} h_{-n} w^{2n}]\right)\cdot V^+(w)V^-(z)\\
&=\exp \left(\sum_{m>0} \frac{1}{m}\frac{w^{2m}}{z^{2m}}\right)\cdot V^+(w)V^-(z)=i_{z, w}\exp \left(-\ln \left(1-\frac{w^2}{z^2}\right)\right) \cdot V^+(w)V^-(z).
\end{align*}
\end{proof}
Observe that $V^-(z)$ and $V^+(z)$ are actually functions of $z^2$. With that in mind, denote
\begin{equation}
H^{\beta} (z^2) = V^+(z)^{-1}\beta_\chi(z^2)z^{-2h_0}V^-(z)^{-1}, \quad \quad H^{\gamma} (z^2) = V^+(z)\gamma_\chi (z^2)z^{2h_0}V^-(z).
\end{equation}
\begin{remark}
We want to mention that in the case of the usual boson-fermion correspondence (for the KP hierarchy, aka of type A), one introduces an invertible operator $u$ from the subspace of  charge $m$ to the subspace of charge $m+1$ (see e.g. \cite{Kac}, Section 5.2), mapping the unique--in that case---highest weight vector of charge $m$ to the unique highest weight vector of charge $m+1$. That operator $u$ is then used to define the (simpler) counterparts of $H^{\beta} (z^2)$ and $H^{\gamma} (z^2)$.  As we saw in the previous section, in our case the charge components $\mathit{F^{(m)}_{\chi}}$ are  not irreducible, and therefore such an invertible operator $u$ doesn't exist, at least not as an invertible operator sending a highest weight vector to highest weight vector.
\end{remark}
\begin{lem}
\label{lem:H-betawithHeis}
The following commutation relations hold:
\begin{equation}
[h_\chi(z), H^{\beta} (w^2)] =-\frac{1}{z^2}H^{\beta} (w^2), \quad \quad [h_\chi(z), H^{\gamma} (w^2)] =\frac{1}{z^2}H^{\gamma} (w^2).
\end{equation}
Therefore $H^{\beta} (z^2)$ and $H^{\gamma} (z^2)$ can be considered  fields (vertex operators) on $\mathit{F^{hwv}_\chi}$, i.e., for each $v\in \mathit{F^{hwv}_\chi}$, we have $H^{\beta} (z^2) v\in \mathit{F^{hwv}_\chi}((z^2))$ and $H^{\gamma} (z^2) v\in \mathit{F^{hwv}_\chi}((z^2))$.
\end{lem}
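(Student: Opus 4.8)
The plan is to compute the commutator $[h_\chi(z), H^\beta(w^2)]$ directly, by writing $H^\beta(w^2)$ as the ordered product $V^+(w)^{-1}\,\beta_\chi(w^2)\,w^{-2h_0}\,V^-(w)^{-1}$ and applying the Leibniz (derivation) rule for $[h_\chi(z),\,\cdot\,]$ across the four factors. The factor $w^{-2h_0}$ drops out immediately: the Heisenberg relations $[h_m,h_n]=-m\delta_{m+n,0}1$ give $[h_0,h_n]=0$ for all $n$, so $h_0$ is central in $\mathcal{H}_{\mathbb{Z}}$, hence $[h_\chi(z),w^{-2h_0}]=0$ and that factor is inert. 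Thus only three terms survive, one from each of $V^+(w)^{-1}$, $\beta_\chi(w^2)$ and $V^-(w)^{-1}$.

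For the two $V^\pm$ factors I will use Lemma \ref{lem:comrelV}. Since $[h_\chi(z),X]$ is central (a scalar times the identity) for $X$ a mode of $V^\pm$, the relations pass to the inverses with a change of sign, giving $[h_\chi(z),V^+(w)^{-1}] = +\,i_{z,w}\frac{w^2}{z^2(z^2-w^2)}V^+(w)^{-1}$ and $[h_\chi(z),V^-(w)^{-1}] = -\,i_{w,z}\frac{1}{z^2-w^2}V^-(w)^{-1}$. For the middle factor I will use Lemma \ref{lem:Heisbeta} together with the standard principle that the commutator of two local fields is the difference of the two iota-expansions of the singular part of their OPE, which yields $[h_\chi(z),\beta_\chi(w^2)]=\big(-i_{z,w}\frac{1}{z^2-w^2}+i_{w,z}\frac{1}{z^2-w^2}\big)\beta_\chi(w^2)$. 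In each of the three terms the surviving operator content reassembles into exactly $H^\beta(w^2)$, so the whole commutator becomes a scalar multiple of $H^\beta(w^2)$.

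The crux is the cancellation among these scalar coefficients. The genuinely nonlocal pieces, namely the two $i_{w,z}\frac{1}{z^2-w^2}$ contributions coming from $\beta_\chi$ and from $V^-(w)^{-1}$, appear with opposite signs and cancel identically. The remaining $i_{z,w}$ pieces combine as $i_{z,w}\big(\frac{w^2}{z^2(z^2-w^2)}-\frac{1}{z^2-w^2}\big)=i_{z,w}\frac{w^2-z^2}{z^2(z^2-w^2)}=-\frac{1}{z^2}$, which is pole-free so its iota-expansion is trivial. This gives $[h_\chi(z),H^\beta(w^2)]=-\frac{1}{z^2}H^\beta(w^2)$; the computation for $H^\gamma$ is identical with all signs reversed, yielding $+\frac{1}{z^2}H^\gamma(w^2)$. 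This precise cancellation is the whole point of the lemma, and is exactly why the dressing by $V^\pm$ and $w^{\pm2h_0}$ was chosen: it converts the delta-type singular commutator of the bare field $\beta_\chi$ with $h_\chi$ into a single clean term. I expect the only delicate point to be the bookkeeping of the two iota-expansions, so that the $i_{w,z}$ terms are seen to cancel exactly rather than merely up to a local (delta) correction.

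Finally I extract modes from $h_\chi(z)=\sum_{n\in\mathbb{Z}}h_n z^{-2n-2}$. Since the right-hand side $-\frac{1}{z^2}H^\beta(w^2)$ carries only the power $z^{-2}$, matching coefficients gives $[h_n,H^\beta(w^2)]=0$ for every $n\neq0$ and $[h_0,H^\beta(w^2)]=-H^\beta(w^2)$. In particular $[h_n,H^\beta(w^2)]=0$ for all $n>0$, so for any highest weight vector $v\in\mathit{F^{hwv}_\chi}$ (that is, $h_nv=0$ for $n>0$) we get $h_n\big(H^\beta(w^2)v\big)=H^\beta(w^2)(h_nv)=0$ for all $n>0$. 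Since $H^\beta(w^2)$ is a well-defined field on $\mathit{F_\chi}$ by construction (using that $h_n$ lowers $\deg$ for $n>0$ and that $\deg$ is bounded below on $\mathit{F_\chi}$, so $H^\beta(w^2)v\in\mathit{F_\chi}((w^2))$), every coefficient of $H^\beta(w^2)v$ is again a highest weight vector, i.e.\ $H^\beta(w^2)v\in\mathit{F^{hwv}_\chi}((w^2))$; the same argument applies to $H^\gamma$.
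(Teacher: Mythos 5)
Your proposal is correct and follows essentially the same route as the paper's own proof: a Leibniz-rule expansion of $[h_\chi(z),\,\cdot\,]$ across the four factors of $H^{\beta}(w^2)$, with $w^{-2h_0}$ inert, the signs flipped on the inverses of $V^{\pm}$ via Lemma \ref{lem:comrelV}, and the bare-field commutator from Lemma \ref{lem:Heisbeta}, after which the two $i_{w,z}$ expansions cancel exactly and the $i_{z,w}$ pieces sum to $-\frac{1}{z^2}$, precisely the four scalar terms displayed in the paper's computation. Your mode extraction and the conclusion that $H^{\beta}(w^2)v$, $H^{\gamma}(w^2)v$ have highest-weight-vector coefficients for $v\in\mathit{F^{hwv}_\chi}$ likewise match the paper's argument, with your remark on $\deg$-boundedness a harmless extra justification.
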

\begin{proof}
We have
\begin{align*}
[h_\chi(z), H^{\beta} (w^2)]& = [h_\chi(z), V^+(w)^{-1}\beta (w^2)w^{-2h_0}V^-(w)^{-1}]\\
& =\Big(\sum_{n>0}\frac{z^{2n-2}}{w^{2n}}-i_{z, w}\frac{1}{z^2-w^2} -i_{w, z}\frac{1}{w^2-z^2}+\sum_{n>0}\frac{w^{2n}}{z^{2n+2}}\Big)H^{\beta} (w^2)=-\frac{1}{z^2}H^{\beta} (w^2).
\end{align*}
Hence we see that
\begin{equation}
\label{eqn:Hbetacharge}
[h_n, H^{\beta} (w^2)]=0, \quad \text{for\ any}\quad n\neq 0; \quad [h_0, H^{\beta} (w^2)]= - H^{\beta} (w^2).
\end{equation}
We can similarly see that
\begin{align*}
[h_\chi(z), H^{\gamma} (w^2)]& = [h_\chi(z), V^+(w)\gamma (w^2)w^{2h_0}V^-(w)]\\
& =\Big(-\sum_{n>0}\frac{z^{2n-2}}{w^{2n}}+i_{z, w}\frac{1}{z^2-w^2} +i_{w, z}\frac{1}{w^2-z^2}-\sum_{n>0}\frac{w^{2n}}{z^{2n+2}}\Big)H^{\gamma} (w^2)=\frac{1}{z^2}H^{\gamma} (w^2).
\end{align*}
Thus
\begin{equation}
\label{eqn:Hgammacharge}
[h_n, H^{\gamma} (w^2)]=0, \quad \text{for\ any}\quad n\neq 0; \quad [h_0, H^{\gamma} (w^2)]= H^{\gamma} (w^2).
\end{equation}
Now let $v$ be a highest weight vector, i.e., $v\in \mathit{F^{hwv}_\chi}$; from \eqref{eqn:Hbetacharge}  it is clear that
\[
h_n H^{\beta} (z^2)v =H^{\beta} (z^2)h_n v =0, \quad  \text{for\ any}\quad n> 0.
\]
Hence the coefficients of $H^{\beta} (z^2) v$  are in fact highest weight vectors themselves, i.e.,  $H^{\beta} (z^2) v\in \mathit{F^{hwv}_\chi}((z^2))$ (instead of the more general $\mathit{F_\chi}((z))$). Therefore we can view the field $H^{\beta} (z^2)$ as a field on $\mathit{F^{hwv}_\chi}$, instead of more generally on $\mathit{F_\chi}$. Similarly for $H^{\gamma} (z^2)$.
\end{proof}
As mentioned above, in the case of the boson-fermion correspondence of type A (the bosonization of the KP hierarchy), the counterparts of the fields $H^{\beta} (z^2)$ and $H^{\gamma} (z^2)$ are the simple operators  $u^{-1}$ and $u$, see e.g. \cite{Kac}, Section 5.2  (which can be identified with $e^{-\alpha}$ and $e^{\alpha}$ if one identifies the vector space of highest weight vectors in that case with $\mathbb{C}[e^{\alpha}, e^{-\alpha}]$). In particular there the operators $u^{-1}$ and $u$ are actually independent of $z$. This is not the case for the fields $H^{\beta} (z^2)$ and $H^{\gamma} (z^2)$, as we will show:
\begin{prop}\label{prop:symOPEs}
The following commutation relations hold:
\begin{align}
&\{H^{\beta} (z^2), H^{\gamma} (w^2)\} =i_{z, w}\frac{1}{(z^2 -w^2)^2}- i_{w, z}\frac{1}{(w^2 -z^2)^2}; \\
&\{H^{\beta} (z^2), H^{\beta} (w^2)\} = 0; \quad \{H^{\gamma} (z^2), H^{\gamma} (w^2)\} = 0.
\end{align}
Here we use the notation $\{A, B\}: =AB +BA$ for two operators $A, B$.
\end{prop}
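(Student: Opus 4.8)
The plan is to compute both operator products $H^{\beta}(z^2)H^{\gamma}(w^2)$ and $H^{\gamma}(w^2)H^{\beta}(z^2)$ by bringing each to normal order and then adding. All the tools are already available: the reordering relations for $V^{\pm}$ from Lemma \ref{lem:comrelV}, the Heisenberg commutators with $\beta_\chi,\gamma_\chi$ obtained from Lemma \ref{lem:Heisbeta} (namely $[h_m,\beta_\chi(u)]=-u^{m}\beta_\chi(u)$ and $[h_m,\gamma_\chi(u)]=u^{m}\gamma_\chi(u)$, in particular $[h_0,\beta_\chi]=-\beta_\chi$ and $[h_0,\gamma_\chi]=+\gamma_\chi$), and the $\beta\gamma$ OPEs recalled above. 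Conjugating a field by $V^{\pm}(w)^{\pm1}$ exponentiates these commutators into a scalar factor $\tfrac{u-w^2}{u}$ or $\tfrac{u}{u-w^2}$, while commuting $z^{\pm2h_0}$ past a field of definite $h_0$-charge produces a power of $z^2$. The whole computation is the bookkeeping of the scalar that emerges in front of the normal-ordered product, expanded in the region $|z|\gg|w|$ (i.e.\ with $i_{z,w}$) for the given ordering and $|w|\gg|z|$ (with $i_{w,z}$) for the opposite one.

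I would dispatch the diagonal brackets first. For $H^{\beta}(z^2)H^{\beta}(w^2)$ the $\beta\beta$ OPE vanishes, so the two fields contribute \emph{no} c-number; the only scalars come from the $V^{\pm}$ reorderings and the $z^{-2h_0}$ conjugation. Collecting them --- one factor $\tfrac{z^2}{z^2-w^2}$ from the $V^-(z)^{-1}V^+(w)^{-1}$ relation, two factors $\tfrac{z^2-w^2}{z^2}$ from passing $V^+(w)^{-1}$ and $V^-(z)^{-1}$ through the two $\beta_\chi$'s, and a factor $z^2$ from $z^{-2h_0}$ crossing $\beta_\chi(w^2)$ --- I expect these to telescope to the single antisymmetric polynomial $z^2-w^2$, multiplying a normal-ordered operator $\Omega_{z,w}$ that is manifestly symmetric under $z\leftrightarrow w$. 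Then $H^{\beta}(w^2)H^{\beta}(z^2)=(w^2-z^2)\,\Omega_{z,w}$, and the two add to zero, giving $\{H^{\beta}(z^2),H^{\beta}(w^2)\}=0$; the computation for $H^{\gamma}$ is identical, with $V^{\pm}$ and $z^{2h_0}$ replacing their inverses.

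For the mixed bracket the contraction $i_{z,w}\frac{1}{z^2-w^2}$ of $\beta_\chi(z^2)$ against $\gamma_\chi(w^2)$ is now present, and the same reduction should yield
\begin{equation*}
H^{\beta}(z^2)H^{\gamma}(w^2)=i_{z,w}\frac{1}{(z^2-w^2)^2}\,W_{z,w}+i_{z,w}\frac{1}{z^2-w^2}\,:H^{\beta}(z^2)H^{\gamma}(w^2):,
\end{equation*}
where $W_{z,w}=V^+(z)^{-1}V^+(w)(w^2/z^2)^{h_0}V^-(z)^{-1}V^-(w)$ satisfies $W_{w,w}=1$ and $:H^{\beta}H^{\gamma}:$ is a genuine field. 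Running the same steps for $H^{\gamma}(w^2)H^{\beta}(z^2)$ produces the very same $W_{z,w}$ and the same normal-ordered field, but with $-i_{w,z}\frac{1}{(w^2-z^2)^2}$ in front of $W_{z,w}$; that extra sign is inherited from the antisymmetry $\gamma_\chi(w^2)\beta_\chi(z^2)\sim-\tfrac{1}{w^2-z^2}$ of the $\beta\gamma$ OPE, and it is what ultimately makes the bracket an anticommutator rather than a commutator.

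Adding the two orderings, the coefficient of $W_{z,w}$ becomes $(i_{z,w}-i_{w,z})\frac{1}{(z^2-w^2)^2}=\partial_{w^2}\delta(z^2,w^2)$ and the coefficient of $:H^{\beta}H^{\gamma}:$ becomes $(i_{z,w}-i_{w,z})\frac{1}{z^2-w^2}=\delta(z^2,w^2)$, the formal delta in the variables $z^2,w^2$. This last reduction is the step I expect to be the genuine obstacle. One must invoke the formal-distribution calculus of \cite{ACJ} (see also \cite{Kac}): the identities $\delta(z^2,w^2)f(z^2)=\delta(z^2,w^2)f(w^2)$ and $(z^2-w^2)\,\partial_{w^2}\delta(z^2,w^2)=\delta(z^2,w^2)$, together with the two coincidence-limit facts $:H^{\beta}(w^2)H^{\gamma}(w^2):\,=h_\chi(w^2)$ and $\partial_{z^2}W_{z,w}\big|_{z=w}=-h_\chi(w^2)$. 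These turn the delta-term into $\delta(z^2,w^2)h_\chi(w^2)$ and the $\partial\delta$-term into $\partial_{w^2}\delta(z^2,w^2)-\delta(z^2,w^2)h_\chi(w^2)$, so the two operator-valued ($h_\chi$) contributions cancel and only the c-number $\partial_{w^2}\delta(z^2,w^2)=i_{z,w}\frac{1}{(z^2-w^2)^2}-i_{w,z}\frac{1}{(w^2-z^2)^2}$ survives, which is exactly the asserted right-hand side. The residual labor is routine: verifying $\partial_{z^2}W_{z,w}|_{z=w}=-h_\chi(w^2)$ by differentiating the three $z$-dependent factors of $W_{z,w}$ and recognizing $\sum_{n\in\mathbb{Z}}h_n(w^2)^{-n-1}=h_\chi(w^2)$, and keeping the scalar bookkeeping straight. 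The single easiest place to slip is the direction of the $V^+$-factor: one must use $\tfrac{z^2-w^2}{z^2}$ when moving $V^+(w)^{-1}$ past $\beta_\chi$ but $\tfrac{z^2}{z^2-w^2}$ when moving $V^+(w)$, and confusing the two silently converts the harmless polynomial $z^2-w^2$ into a spurious pole.
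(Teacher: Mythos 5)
Your proposal is correct and is essentially the paper's own proof: the paper likewise brings both orderings to normal order using the reordering relations of Lemmas \ref{lem:comrelV} and \ref{lem:comrelbetaV}, obtains the antisymmetric scalar $z^2-w^2$ in front of a symmetric normal-ordered remainder in the diagonal cases, and in the mixed case arrives at exactly your decomposition $i_{z,w}\tfrac{1}{(z^2-w^2)^2}W_{z,w}+i_{z,w}\tfrac{1}{z^2-w^2}\,(\text{dressed normal-ordered product})$, with the opposite expansions $-i_{w,z}\tfrac{1}{(w^2-z^2)^2}$, etc., for the reversed order. Upon summing, the paper invokes the same delta-function identities and the same two coincidence limits you identify --- the dressed normal-ordered product collapsing to $:\beta_\chi(w^2)\gamma_\chi(w^2):\,=h_\chi(w)$ and the $z^2$-derivative of the $V^{\pm}$- and $z^{-2h_0}$-factors at $z=w$ equal to $-h_\chi(w)$ --- so the two operator-valued contributions cancel and only the c-number $\partial_{w^2}\delta(z^2,w^2)$ survives, precisely as you predicted.
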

If we use the delta function notation (see \cite{Kac}),
\[
\delta(z, w): =\sum _{n\in\mathbb{Z}} \frac{z^n}{w^{n+1}} = i_{z, w}\frac{1}{z -w} +i_{w, z}\frac{1}{w -z},
\]
the nontrivial commutation relation in the proposition above can be written as
\[
\{H^{\beta} (z^2), H^{\gamma} (w^2)\} =\partial_{w^2}\delta(z^2, w^2); \quad \{H^{\gamma} (z^2), H^{\beta} (w^2)\} =\partial_{z^2}\delta(z^2, w^2).
\]
For the proof of this proposition we need the following
\begin{lem} The following commutation relations hold:
\label{lem:comrelbetaV}
\begin{align}
&\beta_\chi(z^2) V^+(w) = i_{z, w}\frac{z^2}{z^2 -w^2}V^+(w)\beta_\chi(z^2), \quad \beta_\chi(z^2) V^+(w)^{-1} = \frac{z^2 -w^2}{z^2}V^+(w)^{-1}\beta_\chi(z^2); \\
&\beta_\chi(z^2) V^-(w) = \frac{w^2 -z^2}{w^2}V^-(w)\beta_\chi(z^2), \quad \beta_\chi(z^2) V^-(w)^{-1} = i_{w, z} \frac{w^2}{w^2 -z^2}{w^2}V^-(w)^{-1}\beta_\chi(z^2); \\
&\gamma_\chi(z^2) V^+(w) = \frac{z^2 -w^2}{z^2}V^+(w)\gamma_\chi(z^2), \quad \gamma_\chi(z^2) V^+(w)^{-1} = i_{z, w}\frac{z^2}{z^2 -w^2}V^+(w)^{-1}\gamma_\chi(z^2);\\
&\gamma_\chi(z^2) V^-(w) = \frac{w^2 -z^2}{w^2}V^-(w)\gamma_\chi(z^2), \quad \gamma_\chi(z^2) V^-(w)^{-1} = i_{w, z}\frac{w^2}{w^2 -z^2} V^-(w)^{-1}\gamma_\chi(z^2).
\end{align}
\end{lem}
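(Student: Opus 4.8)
The plan is to reduce all eight relations to a single mechanism: conjugating $\beta_\chi(z^2)$ (resp. $\gamma_\chi(z^2)$) past an exponential of Heisenberg modes collapses to multiplication by a scalar rational function, because the adjoint action of each $h_n$ on $\beta_\chi$ (resp. $\gamma_\chi$) returns a c-number multiple of $\beta_\chi$ (resp. $\gamma_\chi$) itself. This is the structural fact that makes the whole lemma elementary once it is isolated.

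First I would extract the mode commutators from Lemma \ref{lem:Heisbeta}. Writing $h_\chi(z)=\sum_{m}h_m (z^2)^{-m-1}$ as a series in $u:=z^2$ and reading off the singular part $-\tfrac{1}{z^2-w^2}\beta_\chi(w^2)$ of the OPE, the standard mode extraction (residue in $u$ against $\delta(u,v)$, $v:=w^2$) yields, for every $m\in\mathbb{Z}$,
\[
[h_m,\beta_\chi(w^2)]=-w^{2m}\beta_\chi(w^2),\qquad [h_m,\gamma_\chi(w^2)]=+w^{2m}\gamma_\chi(w^2).
\]
The only content here is that the OPE has a simple pole with residue proportional to $\beta_\chi$ (resp. $\gamma_\chi$) and no higher-order poles, so the only surviving commutator coefficient is the residue.

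Next, for each exponential $V^\pm(w)^{\pm1}=\exp(X)$ I would apply the identity $B\,e^{X}=e^{X}\,(e^{-\mathrm{ad}_X}B)$. Taking $B=\beta_\chi(z^2)$ and, for instance, $X=\sum_{n>0}\tfrac{1}{n}h_{-n}w^{2n}$ (so $e^{X}=V^+(w)$), the previous step gives
\[
\mathrm{ad}_X\,\beta_\chi(z^2)=\sum_{n>0}\frac{w^{2n}}{n}[h_{-n},\beta_\chi(z^2)]=-\Big(\sum_{n>0}\frac{1}{n}\frac{w^{2n}}{z^{2n}}\Big)\beta_\chi(z^2).
\]
Since the scalar in parentheses commutes with everything, $e^{-\mathrm{ad}_X}$ merely multiplies $\beta_\chi(z^2)$ by $\exp\!\big(\sum_{n>0}\tfrac{1}{n}(w^2/z^2)^n\big)=i_{z,w}\tfrac{z^2}{z^2-w^2}$, valid for $\abs{z}\gg\abs{w}$; this is precisely the first relation. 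The remaining seven follow from the identical computation with the four choices of exponential and the two fields. Three data points fix the outcome in each case: the sign of the commutator ($-$ for $\beta_\chi$, $+$ for $\gamma_\chi$), the sign of the exponent (reversed for the inverse operators), and whether the relevant modes are $h_{-n}$ (producing powers $w^{2n}/z^{2n}$, summed for $\abs{z}\gg\abs{w}$) or $h_n$ (producing powers $z^{2n}/w^{2n}$, summed for $\abs{w}\gg\abs{z}$). Together these determine which rational function appears and which expansion prescription, $i_{z,w}$ or $i_{w,z}$, is forced.

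The computations are entirely mechanical; the only real care needed is bookkeeping of signs and of the geometric-series convergence region, which is exactly what dictates $i_{z,w}$ versus $i_{w,z}$ and explains why some relations come out as exact Laurent polynomials (e.g.\ the $V^-$ cases, where the series terminates after resummation into $\tfrac{w^2-z^2}{w^2}$) while others require an explicit expansion of a genuine pole. I expect no conceptual obstacle beyond isolating the collapse $e^{-\mathrm{ad}_X}\beta_\chi=(\text{scalar})\,\beta_\chi$, which holds precisely because $[h_n,\beta_\chi]$ is a c-number times $\beta_\chi$; once that is in hand, each of the eight identities is a one-line geometric summation.
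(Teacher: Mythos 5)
Your mechanism is correct, and it is genuinely different from the paper's route. The paper does not touch the mode commutators at all: it inverts the definition of the dressed fields to write $\beta_\chi(z^2)=V^+(z)H^{\beta}(z^2)z^{2h_0}V^-(z)$ (and similarly for $\gamma_\chi$), then moves $V^\pm(w)^{\pm1}$ through this product using Lemma \ref{lem:H-betawithHeis} (that $H^{\beta},H^{\gamma}$ commute with all $h_n$, $n\neq 0$, hence with the $V$'s) together with the $V$--$V$ exchange relations of Lemma \ref{lem:comrelV}, so each identity costs exactly one $V$--$V$ exchange factor. Your proof instead extracts $[h_m,\beta_\chi(w^2)]=-w^{2m}\beta_\chi(w^2)$, $[h_m,\gamma_\chi(w^2)]=+w^{2m}\gamma_\chi(w^2)$ from Lemma \ref{lem:Heisbeta} and applies $Be^X=e^X\bigl(e^{-\mathrm{ad}_X}B\bigr)$ with a geometric resummation. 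Your version is more self-contained (it needs neither Lemma \ref{lem:comrelV} nor Lemma \ref{lem:H-betawithHeis}, nor the definition of $H^{\beta},H^{\gamma}$ --- a genuine advantage since this lemma is logically prior material used to prove Proposition \ref{prop:symOPEs}); the paper's version recycles machinery it has already established. Your worked case ($\beta_\chi(z^2)V^+(w)$) is computed correctly, including the $i_{z,w}$ prescription.

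There is, however, one point you must confront explicitly rather than wave through with ``the remaining seven follow.'' Your mechanism, executed as you describe, gives for the fourth line $\gamma_\chi(z^2)V^-(w)=i_{w,z}\tfrac{w^2}{w^2-z^2}V^-(w)\gamma_\chi(z^2)$ and $\gamma_\chi(z^2)V^-(w)^{-1}=\tfrac{w^2-z^2}{w^2}V^-(w)^{-1}\gamma_\chi(z^2)$ --- i.e.\ the two factors \emph{interchanged} relative to the printed statement. Indeed, since $[h_n,\gamma_\chi(z^2)]=z^{2n}\gamma_\chi(z^2)$ and $V^-(w)=\exp\bigl(-\sum_{n>0}\tfrac1n h_nw^{-2n}\bigr)$, one gets $e^{-\mathrm{ad}_Y}\gamma_\chi=\exp\bigl(+\sum_{n>0}\tfrac1n (z^2/w^2)^n\bigr)\gamma_\chi$, a genuine pole, not a terminating polynomial; your parenthetical claim that ``the $V^-$ cases'' terminate is true only for $\beta_\chi$ with $V^-(w)$ and for $\gamma_\chi$ with $V^-(w)^{-1}$. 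This is a defect of the printed lemma, not of your method: one checks directly on the vacuum, where $h_n\gamma_\chi(z^2)|0\rangle=z^{2n}\gamma_\chi(z^2)|0\rangle$ gives $V^-(w)\gamma_\chi(z^2)|0\rangle=\tfrac{w^2-z^2}{w^2}\gamma_\chi(z^2)|0\rangle$, so the printed relation would force $\gamma_\chi(z^2)|0\rangle=\bigl(\tfrac{w^2-z^2}{w^2}\bigr)^2\gamma_\chi(z^2)|0\rangle$, which is false, while your version is consistent. The source of the error in the paper's own proof is the step $V^+(z)^{-1}V^-(w)=\bigl(1-\tfrac{z^2}{w^2}\bigr)V^-(w)V^+(z)^{-1}$: the last relation of Lemma \ref{lem:comrelV}, with variables renamed, actually gives $V^+(z)^{-1}V^-(w)=i_{w,z}\tfrac{w^2}{w^2-z^2}V^-(w)V^+(z)^{-1}$, the reciprocal factor. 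Reassuringly, the proof of Proposition \ref{prop:symOPEs} uses the \emph{corrected} form (it inserts $i_{z,w}\tfrac{z^2}{z^2-w^2}$ when moving $V^-(z)^{-1}$ past $\gamma_\chi(w^2)$), so nothing downstream depends on the misprint; likewise the stray factor $w^2$ in the printed $\beta_\chi(z^2)V^-(w)^{-1}$ relation is a typo that your computation silently fixes. To make your proof complete, tabulate the eight scalars $e^{\mp\mathrm{ad}_X}$ explicitly and flag the two disagreements with the statement as printed.
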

\begin{proof}
From the definition of $H^{\beta} (z^2)$ we have
 \begin{align*}
\beta_\chi(z^2) V^+(w) &= V^+(z)H^{\beta} (z^2)z^{2h_0}V^-(z) V^+(w) = V^+(z)H^{\beta} (z^2)z^{2h_0} i_{z, w}\frac{z^2}{z^2 -w^2}V^+(w)V^-(z)\\
&= i_{z, w}\frac{z^2}{z^2 -w^2}V^+(w) V^+(z)H^{\beta} (z^2)z^{2h_0}V^-(z) = i_{z, w}\frac{z^2}{z^2 -w^2}V^+(w)\beta_\chi(z^2).
\end{align*}
Here we used both Lemma \ref{lem:H-betawithHeis}, namely that $H^{\beta} (z^2)$ commutes with both $V^+(w)$ and $V^-(z)$, as well as Lemma \ref{lem:comrelV}.  Similarly
 \begin{align*}
\gamma_\chi(z^2) V^-(w) &= V^+(z)^{-1}H^{\gamma} (z^2)z^{-2h_0}V^-(z)^{-1} V^-(w) = V^+(z)^{-1}V^-(w) H^{\gamma} (z^2)z^{-2h_0}V^-(z)^{-1} \\
&= \left(1-\frac{z^2}{w^2}\right) V^-(w) V^+(z)^{-1} H^{\gamma} (z^2)z^{-2h_0}V^-(z)^{-1}  = \frac{w^2 -z^2} {w^2} V^-(w)\gamma_\chi(z^2).
\end{align*}
The other relations are proved similarly.
\end{proof}
We now return to the proof of the Proposition.
\begin{proof}
We will prove the first of the nontrivial relations, the other is proved similarly. We use the commutation relations from Lemma \ref{lem:comrelV}, and commute successively the annihilating $V^-(z)^{-1}$ to the right, and the creating $V^+(w)$ to the left:
\begin{align*}
H^{\beta} (z^2) H^{\gamma} (w^2)& =  V^+(z)^{-1}\beta_\chi (z^2)z^{-2h_0}V^-(z)^{-1} V^+(w)\gamma_\chi (w^2)w^{2h_0}V^-(w)\\
&= \frac{z^2 -w^2}{z^2} V^+(z)^{-1}\beta_\chi (z^2)z^{-2h_0}V^+(w)V^-(z)^{-1}\gamma_\chi (w^2)w^{2h_0}V^-(w)\\
&= \frac{z^2 -w^2}{z^2}\cdot i_{z, w}\frac{z^2}{z^2 -w^2} V^+(z)^{-1}V^+(w)\beta_\chi (z^2)z^{-2h_0}V^-(z)^{-1}\gamma_\chi (w^2)w^{2h_0}V^-(w)\\
&= \frac{z^2 -w^2}{z^2}\cdot i_{z, w}\frac{z^2}{z^2 -w^2}\cdot i_{z, w}\frac{z^2}{z^2 -w^2} V^+(z)^{-1}V^+(w)\beta_\chi (z^2)z^{-2h_0}\gamma_\chi (w^2)w^{2h_0}V^-(z)^{-1}V^-(w)\\
&= i_{z, w}\frac{z^2}{z^2 -w^2} V^+(z)^{-1}V^+(w)\beta_\chi (z^2)z^{-2h_0}\gamma_\chi (w^2)w^{2h_0}V^-(z)^{-1}V^-(w).
\end{align*}
Now we need to interchange $z^{-2h_0}$ and $\gamma_\chi (w^2)$. From Lemma \ref{lem:Heisbeta} we have $h_0 \gamma_\chi (w^2) =\gamma_\chi (w^2)(h_0 +1)$, or we can see directly from
\[
\gamma_\chi (w^2) = \frac{\chi(w) +\chi (-w)}{2} = \sum _{n\in \mathbb{Z}} \chi _{-2n-1/2} w^{2n} = \dots +\chi _{3/2}w^{-2}+ \chi _{-1/2} +\chi _{-5/2}w^2 +\dots ,
\]
in addition to  the fact that acting by $ \chi _{-2n-1/2}$ adds charge of 1, that
\[
z^{-2h_0}\gamma_\chi (w^2)=\frac{1}{z^2}\gamma_\chi (w^2)z^{-2h_0}.
\]
Finally we have from the OPE  of $\beta_\chi (z^2)$ with $\gamma_\chi (w^2)$, plus the definition of a normal ordered product that
\[
\beta_\chi (z^2)\gamma_\chi (w^2) =:\beta_\chi (z^2)\gamma_\chi (w^2): +\frac{1}{z^2-w^2},
\]
and so
\begin{align*}
H^{\beta} (z^2) H^{\gamma} (w^2)& = i_{z, w}\frac{1}{z^2 -w^2} V^+(z)^{-1}V^+(w)\beta_\chi (z^2)\gamma_\chi (w^2)z^{-2h_0}w^{2h_0}V^-(z)^{-1}V^-(w)\\
&= i_{z, w}\frac{1}{z^2 -w^2} V^+(z)^{-1}V^+(w)\left(:\beta_\chi (z^2)\gamma_\chi (w^2): +\frac{1}{z^2-w^2}\right) z^{-2h_0}w^{2h_0}V^-(z)^{-1}V^-(w)\\
&= i_{z, w}\frac{1}{z^2 -w^2} V^+(z)^{-1}V^+(w)\left(:\beta_\chi (z^2)\gamma_\chi (w^2):\right) z^{-2h_0}w^{2h_0}V^-(z)^{-1}V^-(w)\\
 &\hspace{1cm} + i_{z, w}\frac{1}{(z^2 -w^2)^2} V^+(z)^{-1}V^+(w)z^{-2h_0}w^{2h_0}V^-(z)^{-1}V^-(w).
\end{align*}
We can similarly derive
\begin{align*}
H^{\gamma} (w^2) H^{\beta} (z^2) & =  V^+(w)\gamma_\chi (w^2)w^{2h_0}V^-(w) V^+(z)^{-1}\beta_\chi (z^2)z^{-2h_0}V^-(z)^{-1} \\
&= \frac{w^2 -z^2}{w^2} V^+(w)\gamma_\chi (w^2)w^{2h_0} V^+(z)^{-1} V^-(w) \beta_\chi (z^2)z^{-2h_0}V^-(z)^{-1}\\
&= \frac{w^2 -z^2}{w^2}\cdot i_{w, z}\frac{w^2}{w^2 -z^2} V^+(w)\gamma_\chi (w^2)w^{2h_0} V^+(z)^{-1}  \beta_\chi (z^2)z^{-2h_0}V^-(w)V^-(z)^{-1}\\
&= \frac{w^2 -z^2}{w^2}\cdot i_{w, z}\frac{w^2}{w^2 -z^2}\cdot i_{w, z}\frac{w^2}{w^2 -z^2} V^+(w) V^+(z)^{-1} \gamma_\chi (w^2)w^{2h_0} \beta_\chi (z^2)z^{-2h_0}V^-(w)V^-(z)^{-1}\\
&= i_{w, z}\frac{1}{w^2 -z^2} V^+(w) V^+(z)^{-1} \gamma_\chi (w^2) \beta_\chi (z^2)z^{-2h_0}w^{2h_0}V^-(w)V^-(z)^{-1}\\
&= i_{w, z}\frac{1}{w^2 -z^2} V^+(w) V^+(z)^{-1} \left(:\gamma_\chi (w^2) \beta_\chi (z^2): -i_{w, z}\frac{1}{w^2-z^2}\right)z^{-2h_0}w^{2h_0}V^-(w)V^-(z)^{-1}\\
&= i_{w, z}\frac{1}{w^2 -z^2} V^+(w) V^+(z)^{-1} \left(:\beta_\chi (z^2) \gamma_\chi (w^2) : \right)z^{-2h_0}w^{2h_0}V^-(w)V^-(z)^{-1}\\
&\hspace{1cm}  - i_{w, z}\frac{1}{(w^2 -z^2)^2} V^+(w) V^+(z)^{-1}z^{-2h_0}w^{2h_0}V^-(w)V^-(z)^{-1}
\end{align*}
Thus we have
\begin{align*}
H^{\beta} (z^2) H^{\gamma} (w^2) &+ H^{\gamma} (w^2) H^{\beta} (z^2) =\delta (z^2, w^2) V^+(w) V^+(z)^{-1} \left(:\beta_\chi (z^2) \gamma_\chi (w^2) : \right)z^{-2h_0}w^{2h_0}V^-(w)V^-(z)^{-1}\\
& \hspace{1cm} +\partial_{w^2}\delta (z^2, w^2)  V^+(w) V^+(z)^{-1}z^{-2h_0}w^{2h_0}V^-(w)V^-(z)^{-1}
\end{align*}
Now we use the standard properties of the the delta function (see e.g. \cite{Kac}), namely
\[
\delta (z^2, w^2) f(z^2) =\delta (z^2, w^2) f(w^2), \quad \text{and}\quad  \partial_{w^2}\delta (z^2, w^2) f(z^2) = \partial_{w^2}\delta (z^2, w^2)f(w^2) + \delta (z^2, w^2)\partial_{w^2} f(w^2).
\]
Consequently,
\begin{align*}
H^{\beta} (z^2) H^{\gamma} (w^2) + H^{\gamma} (w^2) H^{\beta} (z^2) & =\delta (z^2, w^2) :\beta_\chi (w^2) \gamma_\chi (w^2): +\\
& \hspace{0.7cm} +\partial_{w^2}\delta (z^2, w^2)  + \delta (z^2, w^2)\left(-\sum_{n>0} h_{-n} w^{2n-2} -\sum_{n>0} h_n w^{-2n-2}-h_0w^{-2}\right)\\
& = \delta (z^2, w^2) h_\chi(w) +\partial_{w^2}\delta (z^2, w^2) - \delta (z^2, w^2) h_\chi(w)\\
& = \partial_{w^2}\delta (z^2, w^2).
\end{align*}
Now we prove the first of the trivial relations:
\begin{align*}
H^{\beta} (z^2) H^{\beta} (w^2)& =  V^+(z)^{-1}\beta_\chi (z^2)z^{-2h_0}V^-(z)^{-1}  V^+(w)^{-1}\beta_\chi (w^2)w^{-2h_0}V^-(w)^{-1}\\
&= i_{z, w}\frac{z^2}{z^2 -w^2}  V^+(z)^{-1}\beta_\chi (z^2)  V^+(w)^{-1} z^{-2h_0}  V^-(z)^{-1}\beta_\chi (w^2)w^{-2h_0}V^-(w)^{-1}\\
&= i_{z, w}\frac{z^2}{z^2 -w^2}\cdot \frac{z^2 -w^2}{z^2}  V^+(z)^{-1}  V^+(w)^{-1} \beta_\chi (z^2)  z^{-2h_0}  V^-(z)^{-1}\beta_\chi (w^2)w^{-2h_0}V^-(w)^{-1}\\
&= i_{z, w}\frac{z^2}{z^2 -w^2}\cdot \frac{z^2 -w^2}{z^2}\cdot \frac{z^2 -w^2}{z^2}  V^+(z)^{-1}  V^+(w)^{-1} \beta_\chi (z^2)  z^{-2h_0} \beta_\chi (w^2)w^{-2h_0} V^-(z)^{-1} V^-(w)^{-1}\\
&= (z^2 -w^2)  V^+(z)^{-1}  V^+(w)^{-1} \beta_\chi (z^2)  \beta_\chi (w^2) z^{-2h_0} w^{-2h_0} V^-(z)^{-1} V^-(w)^{-1}.
\end{align*}
Therefore
\[
H^{\beta} (w^2) H^{\beta} (z^2)= (w^2 -z^2)  V^+(z)^{-1}  V^+(w)^{-1} \beta_\chi (w^2)  \beta_\chi (z^2) z^{-2h_0} w^{-2h_0} V^-(z)^{-1} V^-(w)^{-1},
\]
and so
\[
H^{\beta} (z^2) H^{\beta} (w^2)+ H^{\beta}(w) H^{\beta} (z^2) =0.
\]
The relation
\[
H^{\gamma} (z^2) H^{\gamma} (w^2)+ H^{\gamma}(w) H^{\gamma} (z^2) =0.
\]
is proved similarly.
\end{proof}
We index the fields $H^{\beta} (z^2)$ and $H^{\gamma} (z^2)$ in the standard vertex algebra notation:
\begin{equation}
H^{\beta} (z^2)=\sum_{n\in \mathbb{Z}} H^{\beta}_{(n)} z^{-2n-2}; \quad H^{\gamma} (z^2) = \sum_{n\in \mathbb{Z}} H^{\gamma}_{(n)} z^{-2n-2}.
\end{equation}
The proposition above ensures that  $H^{\beta} (z)$ and $H^{\gamma} (z)$  satisfy the OPE relations of the symplectic fermion vertex algebra introduced by Kausch, see e.g. \cite{Curiosities} and \cite{SymplFermionsKausch}. Observe that since the fields $H^{\beta} (z^2)$ and $H^{\gamma} (z^2)$ depend only on $z^2$ we can re-scale back to $z$ as is necessary for a super vertex algebra.  Now we need to check that the space $\mathit{F^{hwv}_\chi}$ satisfies the other conditions for the existence of a vertex algebra structure, as in e.g. the Existence Theorem 4.5, \cite{Kac}.
It is immediate to check that the creation condition is satisfied:
\begin{equation}
H^{\beta} (z^2) |0\rangle  = V^+(z)^{-1}\beta_\chi(z^2)|0\rangle = \chi _{-3/2}|0\rangle  +O(z^2),
\end{equation}
and
\begin{equation}
 H^{\gamma} (z^2) |0\rangle  = V^+(z)\gamma_\chi(z^2)|0\rangle = \chi _{-1/2}|0\rangle  +O(z^2).
\end{equation}
In order to show that the the operators $H^{\beta}_{(n)}$ and $H^{\gamma}_{(n)}$ generate the vector space $\mathit{F^{hwv}_\chi}$ by a successive action on the vacuum $|0\rangle $, we observe that the
vector
\[
H_{(n_1)}H_{(n_2)}\dots H_{(n_k)}|0\rangle,
\]
where $H_{(n_s)}$ is either $H^{\beta}_{(n_s)}$ or $H^{\gamma}_{(n_s)}$,
will appear as a coefficient in the multivariable expression
\[
H(z_1^2)H(z_2^2)\dots H(z_k^2)|0\rangle,
\]
where again $H(z_s^2)$ is either $H^{\beta}(z_s^2)$ or $H^{\gamma}(z_s^2)$. We first observe that as a consequence of
Lemma \ref{lem:H-betawithHeis} these coefficients are themselves highest  weight vectors for the Heisenberg action.

By extending the calculation in the proof of the previous proposition, we can see that
\[
H(z_1^2)H(z_2^2)\dots H(z_k^2)=\prod_{s>l} i_{z_s, z_l}(z_s^2 -z_l^2)^\pm \prod_{s=1}^k \left(V^+(z_s)^\pm\right) \prod_{s=1}^k(\beta-\text{or}-\gamma)_\chi (z_s^2)\prod_{s=1}^k \left(z_s^{\pm2h_0} V^-(z_s)^{\pm}\right),
\]
the $\pm$ depends on whether the  $H(z_s^2)$ is  $H^{\beta}(z_s^2)$ or $H^{\gamma}(z_s^2)$.
Therefore we have
\[
H(z_1^2)H(z_2^2)\dots H(z_k^2)|0\rangle=\prod_{s>l} i_{z_s, z_l}(z_s^2 -z_l^2)^\pm \prod_{s=1}^k \left(V^+(z_s)^\pm\right) \prod_{s=1}^k(\beta-\text{or}-\gamma)_\chi (z_s^2)|0\rangle.
\]
Now the nonzero coefficients in the above multivariate expression will be precisely those for which the coefficients in
$\prod_{s=1}^k(\beta-\text{or}-\gamma)_\chi (z_s^2)|0\rangle$ cannot be canceled by an action of the operators from $\prod_{s=1}^k \left(V^+(z_s)^\pm\right)$. The coefficients in $\prod_{s=1}^k(\beta-\text{or}-\gamma)_\chi (z_s^2)|0\rangle$  are the elements $\left(\chi _{-j_k}\right)^{m_k}\dots \left(\chi _{-j_2}\right)^{m_2}\left(\chi _{-j_1}\right)^{m_1}|0\rangle$ and they span $\mathit{F_\chi}$. Thus the nonzero coefficients will correspond precisely to monomials  $\left(\chi _{-j_k}\right)^{m_k}\dots \left(\chi _{-j_2}\right)^{m_2}\left(\chi _{-j_1}\right)^{m_1}|0\rangle$  that cannot be obtained by acting with the Heisenberg algebra on combinations of similar monomials but of lower degree. Due to the fact that the representation of the Heisenberg algebra on $\mathit{F_\chi}$ is completely reducible, those correspond  precisely to the highest weight vectors for the Heisenberg action.   Thus we see that successive action by the operators $H^{\beta}_{(n)}$ and $H^{\gamma}_{(n)}$ will generate the the space $\mathit{F^{hwv}_\chi}$ of the highest weight vectors for the Heisenberg action. In fact we can see directly that this is a strong generation, i.e, the only indexes appearing in the generating elements $H_{(n_1)}H_{(n_2)}\dots H_{(n_k)}|0\rangle$  are negative, $n_s <0, s=1, 2, \dots k$.

\begin{example}  For the two special families of highest weight vectors
$ \chi_{-\frac{1}{2}}^n|0\rangle$, \ $\chi_{-\frac{3}{2}}^n|0\rangle$  one can easily check that
\begin{align}
\chi_{-\frac{1}{2}}^n|0\rangle &= H^{\gamma}_{(-n)}\dots H^{\gamma}_{(-2)} H^{\gamma}_{(-1)}|0\rangle\\
\chi_{-\frac{3}{2}}^n|0\rangle &= H^{\beta}_{(-n)}\dots H^{\beta}_{(-2)} H^{\beta}_{(-1)}|0\rangle.
\end{align}
\end{example}
Finally, to apply the Existence Theorem 4.5 of \cite{Kac} we need a Virasoro element, which will define the translation operator. As is well known, from the start the symplectic vertex algebra was of interest due to the  properties of its Virasoro field and its (logarithmic) modules. Namely, it is immediate to calculate that the field (observe that on the space  $\mathit{F^{hwv}_\chi}$ this normal ordered product is well defined):
\begin{equation}
L^{hwv} (z^2):  =: H^{\gamma}(z^2) H^{\beta}(z^2):  =\sum_{n\in \mathbb{Z}} L^{hwv}_n z^{-2n-4}
\end{equation}
is a Virasoro field with central charge $c=-2$, namely
\[
L^{hwv} (z^2)L^{hwv} (w^2)\sim \frac{2L^{hwv} (w^2)}{(z^2-w^2)^2} + \frac{\partial_{w^2}L^{hwv} (w^2)}{z^2-w^2} -\frac{1}{(z^2-w^2)^4}.
\]
This can  easily be proved by Wick's Theorem using the OPEs derived in Proposition \ref{prop:symOPEs}, so we omit it. Thus we can take $L^{hwv}_{-1}$ as a translation operator $T$ on $\mathit{F^{hwv}_\chi}$.
We can then immediately calculate that
\[
T|0\rangle =0, \quad [T, H^{\beta}(z^2)] =\partial_{z^2} H^{\beta}(z^2), \quad \text{and}\quad [T, H^{\gamma}(z^2)] =\partial_{z^2} H^{\gamma}(z^2),
\]
which completes the requirements of  the Existence Theorem 4.5 of \cite{Kac}.
Thus, and after observing that we can re-scale from   $z^2$ to $z$ (as all relevant fields, namely $H^{\beta}(z^2)$, $H^{\gamma}(z^2)$ and $L^{hwv} (z^2)$, depend only on $z^2$),  we arrive at the following
\begin{thm}\label{thm:symplVA}
The vector space  $\mathit{F^{hwv}_\chi}$ spanned by the highest weight vectors has a structure of a super vertex algebra,  strongly generated by the fields $H^{\beta}(z)$ and $H^{\gamma}(z)$, with vacuum vector $|0\rangle$, translation operator $T=L^{hwv}_{-1}$, and vertex operator map induced by \begin{equation}
Y( \chi _{-1/2}|0\rangle, z)=H^{\gamma} (z), \quad Y( \chi _{-3/2}|0\rangle, z)=H^{\beta} (z).
\end{equation}
This vertex algebra structure is a realization of  the symplectic fermion vertex algebra, indicated by the OPEs:
\begin{align}
&H^{\beta} (z) H^{\gamma} (w) \sim \frac{1}{(z -w)^2},  \quad  H^{\gamma} (z)H^{\beta} (w)\sim -\frac{1}{(z -w)^2}; \\
&H^{\beta} (z) H^{\beta} (w)\sim  0; \quad H^{\gamma} (z) H^{\gamma} (w) \sim 0.
\end{align}
\end{thm}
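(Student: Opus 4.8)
The plan is to invoke the Existence Theorem (Theorem 4.5 of \cite{Kac}), which manufactures a (super) vertex algebra structure on a vector space $V$ from the data of a vacuum vector, a translation endomorphism $T$, and a family of mutually local fields obeying the vacuum, field, and translation-covariance axioms, subject to the requirement that the modes of these fields acting on the vacuum span $V$. Here I would take $V=\mathit{F^{hwv}_\chi}$, the generating fields to be $H^{\beta}$ and $H^{\gamma}$ (rescaled from $z^2$ to $z$, which is legitimate precisely because both fields depend only on $z^2$), the vacuum to be $|0\rangle$, and $T=L^{hwv}_{-1}$. The work is then to confirm each hypothesis of the theorem, each of which rests on a result already established above.

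The verification breaks into several checks. First, one must confirm that $H^{\beta}$ and $H^{\gamma}$ are genuinely fields \emph{on} $\mathit{F^{hwv}_\chi}$, i.e.\ that they preserve the subspace of Heisenberg highest weight vectors; this is exactly Lemma \ref{lem:H-betawithHeis}, whose relations $[h_n, H^{\beta}(w^2)]=[h_n,H^{\gamma}(w^2)]=0$ for $n>0$ force the coefficients of $H^{\beta}(z^2)v$ and $H^{\gamma}(z^2)v$ to be annihilated again by the positive Heisenberg modes. Second, the vacuum and field axioms follow from the creation conditions $H^{\beta}(z^2)|0\rangle=\chi_{-3/2}|0\rangle+O(z^2)$ and $H^{\gamma}(z^2)|0\rangle=\chi_{-1/2}|0\rangle+O(z^2)$, which simultaneously show regularity at $z=0$ on the vacuum and pin down the state--field correspondence asserted in the theorem. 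Third, translation covariance, $T|0\rangle=0$ together with $[T,H^{\beta}(z^2)]=\partial_{z^2}H^{\beta}(z^2)$ and $[T,H^{\gamma}(z^2)]=\partial_{z^2}H^{\gamma}(z^2)$, comes from the Virasoro field $L^{hwv}(z^2)=\ :H^{\gamma}(z^2)H^{\beta}(z^2):\ $ of central charge $c=-2$.

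The computational heart is mutual locality, which I would read off Proposition \ref{prop:symOPEs}. Rewriting its anticommutation relations through the delta function as $\{H^{\beta}(z^2),H^{\gamma}(w^2)\}=\partial_{w^2}\delta(z^2,w^2)$ and $\{H^{\beta},H^{\beta}\}=\{H^{\gamma},H^{\gamma}\}=0$ says exactly that the fields are local, with all singularities concentrated at $z^2=w^2$ (equivalently $z=\pm w$, and after rescaling at $z=w$), and with the symplectic-fermion OPEs $H^{\beta}(z)H^{\gamma}(w)\sim (z-w)^{-2}$, $H^{\gamma}(z)H^{\beta}(w)\sim -(z-w)^{-2}$ and vanishing self-OPEs. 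Because the defining OPE is governed by anticommutators and a double pole, the resulting object is a \emph{super} vertex algebra with $H^{\beta}, H^{\gamma}$ odd, which is exactly why passing from $z^2$ to $z$ is needed to obtain the correct statistics.

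The one point demanding a genuine argument, and the step I would flag as the main obstacle, is the spanning (strong generation) condition: that successive application of the modes $H^{\beta}_{(n)}$ and $H^{\gamma}_{(n)}$ to $|0\rangle$ exhausts $\mathit{F^{hwv}_\chi}$. My strategy is the multivariable normal-ordering computation indicated before the theorem statement: expressing $H(z_1^2)\cdots H(z_k^2)|0\rangle$ as a product of exponentiated Heisenberg operators $V^{\pm}$ times $\prod_s(\beta\text{- or }\gamma)_\chi(z_s^2)|0\rangle$, and showing that the surviving coefficients are precisely those monomials in the $\chi_{-j}$ that cannot be produced from lower-degree monomials by the Heisenberg action. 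By complete reducibility of $\mathit{F_\chi}$ as a Heisenberg module, these are exactly the highest weight vectors spanning $\mathit{F^{hwv}_\chi}$. Making this reduction rigorous, namely that the $V^{\pm}$ factors cancel precisely the Heisenberg-descendant directions and nothing else, is the delicate part; once it is secured, the Existence Theorem applies verbatim and delivers the claimed symplectic fermion vertex algebra structure, with $Y(\chi_{-1/2}|0\rangle,z)=H^{\gamma}(z)$ and $Y(\chi_{-3/2}|0\rangle,z)=H^{\beta}(z)$.
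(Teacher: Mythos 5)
Your proposal is correct and follows essentially the same route as the paper's proof: you verify the hypotheses of Kac's Existence Theorem using Lemma \ref{lem:H-betawithHeis} to see that $H^{\beta}$, $H^{\gamma}$ act on $\mathit{F^{hwv}_\chi}$, Proposition \ref{prop:symOPEs} (in delta-function form) for locality, the creation conditions for the vacuum axiom and state--field correspondence, the Virasoro element $L^{hwv}_{-1}$ for translation covariance, and the same multivariable factorization $H(z_1^2)\cdots H(z_k^2)|0\rangle=\prod i_{z_s,z_l}(z_s^2-z_l^2)^{\pm}\prod V^+(z_s)^{\pm}\prod(\beta\text{-or-}\gamma)_\chi(z_s^2)|0\rangle$ combined with complete reducibility for strong generation. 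Even the step you flag as delicate, that the $V^{+}$ factors cancel precisely the Heisenberg-descendant directions, is treated in the paper by exactly the argument you sketch, so your plan matches the paper's proof point for point.
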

\section{Complete  bosonization}
As we saw in Section 3, the fields $\beta_\chi (z^2)$ and $\gamma_\chi (z^2)$ needed to express the generating field
\[
\chi(z) =\gamma_\chi(z^2) +z \beta_\chi (z^2)
\]
can be written as
\begin{equation} \label{eqn:beta-gamma-bos}
\beta_\chi(z^2)=V^+(z)H^{\beta} (z^2)V^-(z)z^{2h_0}, \quad \quad  \gamma_\chi (z^2) =V^+(z)^{-1}H^{\gamma} (z^2)V^-(z)^{-1}z^{-2h_0}.
\end{equation}
Due to Proposition \ref{prop:HeisDecomp} we can write
\begin{equation}
\mathit{F_\chi}\cong \mathit{F^{hwv}_\chi}\ten \mathbb{C}[x_1,
x_2, \dots , x_n, \dots ].
\end{equation}
The fields    $V^+(z)$ and $V^-(z)$ (consequently $V^+(z)^{-1}$ and $V^-(z)^{-1}$) are bosonic,
 via the action:
\begin{equation} \label{eqn:Heisendiff-x}
h_n\mapsto i\partial _{x_{n}}, \quad h_{-n} \mapsto
inx_n\cdot, \quad \text{for any} \ \ n>0.
\end{equation}
\begin{remark}
As we mentioned before,  we can use an arbitrary re-scaling $h_n\to s_n h_n, \ s_n\neq 0, \ s_n\in \mathbb{C}$, for $n >0$, so that we could have used instead the identification
\[
h_n\mapsto -\partial _{x_{n}}, \quad h_{-n} \mapsto
nx_n\cdot, \quad \text{for any} \ \ n\in \mathbb{N}.
\]
The identification we use here underlines the potential complexification, as seen in \eqref{eqn:finalcompl1} and \eqref{eqn:finalcompl2} below.
\end{remark}
But the fields $H^{\beta} (z^2)$ and $H^{\gamma} (z^2)$ required to complete the description of the generating field $\chi(z)$ are fermionic.
We can, as was done in  \cite{OrlovLeur} for the twisted bosonization, introduce super-variables and derivatives with respect to those super variables to describe  the fields  $H^{\beta} (z)$ and $H^{\gamma} (w)$ and their action on the  space of the highest weigh vectors $\mathit{F^{hwv}_\chi}$.  But in this case, for this second bosonization we can do better, as it is known that the symplectic fermions can be embedded into a lattice vertex algebra. Namely, as in the  Friedan-Martinec-Shenker (FMS) bosonization, \cite{FMS}, and following \cite{Curiosities} and \cite{SymplFermionsKausch}, we can view the fields $H^{\beta} (z)$ and $H^{\gamma} (w)$ as
\begin{equation}
H^{\beta} (z) \mapsto \psi^- (z), \quad H^{\gamma} (z) \mapsto \partial_z\psi^+ (z);
\end{equation}
where $\psi^+ (z)$ and $\psi^- (z)$  are the  charged free  fermion fields used in the bosonization of the KP hierarchy, via the boson-fermion correspondence (see e.g. \cite{Kac}, \cite{Miwa-book}). Specifically, $\psi^+ (z)$ and $\psi^- (z)$ have OPEs
\[
\psi^+ (z)\psi^- (w)\sim \frac{1}{z-w}, \quad \psi^- (z)\psi^+ (w)\sim \frac{1}{z-w}, \quad \psi^+ (z)\psi^+ (w)\sim 0, \quad \psi^- (z)\psi^- (w)\sim 0,
\]
and are the generating fields of the  charged free  fermion vertex super algebra (see e.g \cite{Kac}).
We can use the bosonization of  the  charged free fermion vertex super algebra via the lattice fields
\begin{equation}
\psi^+ (z)\to e^{\alpha}_y (z), \quad \psi^+ (z)\to e^{-\alpha}_y (z),
\end{equation}
where the lattice fields $e^{\alpha}_y (z)$, $e^{-\alpha}_y (z)$  act on
the bosonic  vector space $\mathbb{C}[e^{\alpha}, e^{-\alpha}]\ten \mathbb{C}[y_1, y_2, \dots , y_n\dots]$  by
\begin{align*}
e^{\alpha}_y(z)& =\exp (\sum _{n\ge 1} y_n z^n)\exp (-\sum _{n\ge 1}\frac{\partial}{n\partial y_n} z^{-n})e^{\alpha}z^{\partial_{\alpha}},\\
e^{-\alpha}_y(z)& =\exp (-\sum _{n\ge 1}y_n z^n)\exp (\sum _{n\ge 1}\frac{\partial}{n\partial y_n} z^{-n})e^{-\alpha}z^{-\partial_{\alpha}},
\end{align*}
as is standard in the theory of the KP hierarchy. We use the index $y$
 to indicate these are  the exponentiated boson fields acting on the variables $y_1, y_2, \dots , y_n\dots$. We introduce similarly the Heisenberg field $h^y (z)$,
\begin{equation}
h^y (z) = \sum _{n\ge 1}\frac{\partial}{\partial y_n} z^{-n-1} +h^y_0 z^{-1} + \sum _{n\ge 1}n y_n z^{n-1},
\end{equation}
where $h^y_0$ acts on $\mathbb{C}[e^{\alpha}, e^{-\alpha}]\ten \mathbb{C}[y_1, y_2, \dots , y_n\dots]$  by
$h^y_0 e^{m\alpha} P(y_1, y_2, \dots , y_n\dots) = m e^{m\alpha} P(y_1, y_2, \dots , y_n\dots)$.
Thus, combining the two maps,  we  map $\mathit{F^{hwv}_\chi}$ into a subspace of $\mathbb{C}[e^{\alpha}, e^{-\alpha}]\ten \mathbb{C}[y_1, y_2, \dots , y_n\dots]$, and
\begin{align}\label{eqn:ident1}
H^{\beta} (z) &\to e^{-\alpha}_y(z) =\exp (-\sum _{n\ge 1}y_n z^n)\exp (\sum _{n\ge 1}\frac{\partial}{n\partial y_n} z^{-n})e^{-\alpha}z^{-h^y_0}\\ \label{eqn:ident2}
H^{\gamma} (z) &\to \partial_z e^{\alpha}_y(z) =:h^y (z)\exp (\sum _{n\ge 1} y_n z^n)\exp (-\sum _{n\ge 1}\frac{\partial}{n\partial y_n} z^{-n})e^{\alpha}z^{h^y_0}:.
\end{align}
Now we can combine the actions of the two Heisenberg fields, the $h^y (z)$ and the original $h_\chi(z)$.  Through the above map  the Fock space $\mathit{F_{\chi}}$ will be mapped to a subspace of $\mathbb{C}[e^{\alpha}, e^{-\alpha}]\ten \mathbb{C}[x_1, x_2, \dots , x_n, \dots ;y_1, y_2, \dots , y_n\dots]$. The modes $h_n$ (for clarity we shall write $h^x_n$) of the field $h_\chi(z)$ will act as in \eqref{eqn:Heisendiff-x}, with
\begin{equation}
h^x_0 e^{m\alpha} P(x_1, x_2, \dots , x_n, \dots ;y_1, y_2, \dots , y_n\dots) =m e^{m\alpha} P(x_1, x_2, \dots , x_n, \dots ;y_1, y_2, \dots , y_n\dots).
\end{equation}
The action of $h^x_0$  stems from the identifications \eqref{eqn:ident1} and \eqref{eqn:ident2}  which  determine the charges of the elements of $\mathbb{C}[e^{\alpha}, e^{-\alpha}]\ten \mathbb{C}[y_1, y_2, \dots , y_n\dots]$. Thus implies that the actions  $z^{-h^x_0}$ in \eqref{eqn:beta-gamma-bos} and $z^{h^y_0}$ in  \eqref{eqn:ident2}  will cancel each other.
And so finally we arrive at the complete second bosonization of the CKP hierarchy:
\begin{thm}
The generating field $\chi (z)$ of the CKP hierarchy can be written as
\[
\chi(z) =\gamma_\chi(z^2) +z \beta_\chi (z^2),
\]
where the fields $\beta_\chi (z)$ and $\gamma_\chi(z)$ can be bosonized as follows:
\begin{align} \label{eqn:finalcompl1}
\beta_\chi (z) &\to  \ \exp \Big(i\sum_{n>0}(x_n +iy_n) z^{n}\Big)\exp \Big(-i\sum_{n>0}\frac{1}{n}\left(\frac{\partial}{\partial x_n}+i\frac{\partial}{\partial y_n}\right) z^{-n}\Big)e^{-\alpha},\\ \label{eqn:finalcompl2}
\gamma_\chi(z) &\to \  :\exp \Big(-i\sum_{n>0}(x_n +iy_n)z^{n}\Big) h^y (z) \exp \Big(i\sum_{n>0}\frac{1}{n}\left(\frac{\partial}{\partial x_n}+i\frac{\partial}{\partial y_n}\right) z^{-n}\Big) e^{\alpha}:.
\end{align}
The Fock space $\mathit{F_{\chi}}$ is mapped to a subspace of the bosonic space $\mathbb{C}[e^{\alpha}, e^{-\alpha}]\ten \mathbb{C}[x_1, x_2, \dots , x_n, \dots ;y_1, y_2, \dots , y_n\dots]$, with \ $|0\rangle \mapsto 1$.
The Hirota equation \eqref{eqn:Hirotaeqn} is equivalent to
\begin{equation}
Res_z \Big(\beta_\chi (z)\otimes \gamma_\chi (z)- \gamma_\chi (z)\otimes \beta_\chi (z)\Big) =0.
\end{equation}
\end{thm}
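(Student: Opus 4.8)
The plan is to assemble the two bosonizations that have already been set up: the Heisenberg bosonization of Proposition \ref{prop:HeisDecomp}, which handles the $x$-variables, and the FMS lattice bosonization of the symplectic fermion algebra $\mathit{F^{hwv}_\chi}$ from Theorem \ref{thm:symplVA}, which handles the $y$-variables together with the lattice factor $\mathbb{C}[e^{\alpha}, e^{-\alpha}]$. The natural starting point is the pair of inversion formulas \eqref{eqn:beta-gamma-bos}, which express $\beta_\chi(z^2)$ and $\gamma_\chi(z^2)$ through $H^{\beta}(z^2)$, $H^{\gamma}(z^2)$, the vertex operators $V^{\pm}(z)$, and the charge operator $z^{\pm 2h_0}$; these follow immediately by inverting the definitions of $H^{\beta}$ and $H^{\gamma}$. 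The decomposition $\chi(z)=\gamma_\chi(z^2)+z\beta_\chi(z^2)$ itself is a one-line consequence of the definitions $\beta_\chi(z^2)=\frac{\chi(z)-\chi(-z)}{2z}$ and $\gamma_\chi(z^2)=\frac{\chi(z)+\chi(-z)}{2}$.

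First I would realise the exponentials $V^{\pm}(z)$ bosonically using \eqref{eqn:Heisendiff-x}: under $h_{-n}\mapsto inx_n$ and $h_n\mapsto i\partial_{x_n}$ the operator $V^+(z)$ becomes $\exp\big(i\sum_{n>0}x_n(z^2)^n\big)$ and $V^-(z)$ becomes $\exp\big(-i\sum_{n>0}\frac{1}{n}\partial_{x_n}(z^2)^{-n}\big)$. Next I would substitute the lattice realisations \eqref{eqn:ident1} and \eqref{eqn:ident2} of $H^{\beta}$ and $H^{\gamma}$, which contribute the factors $\exp\big(\mp\sum_{n\ge1}y_n(z^2)^n\big)$, $\exp\big(\pm\sum_{n\ge1}\frac{1}{n}\partial_{y_n}(z^2)^{-n}\big)$ together with $e^{\mp\alpha}$. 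The two families of exponentials then merge: since $i\cdot(iy_n)=-y_n$ and $-i\cdot(i\partial_{y_n})=\partial_{y_n}$, the creation parts combine into $\exp\big(\pm i\sum_{n>0}(x_n+iy_n)(z^2)^n\big)$ and the annihilation parts into $\exp\big(\mp i\sum_{n>0}\frac{1}{n}(\partial_{x_n}+i\partial_{y_n})(z^2)^{-n}\big)$, which is exactly the complexified form appearing in \eqref{eqn:finalcompl1} and \eqref{eqn:finalcompl2}. Finally I would rescale $z^2\mapsto z$, as permitted by the fact that all the fields involved depend only on $z^2$, to land on the stated formulas.

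The hard part will be the bookkeeping of the charge operators. The inversion \eqref{eqn:beta-gamma-bos} carries an external factor $z^{\pm 2h^x_0}$, while the lattice field $e^{-\alpha}_y(z^2)$ secretly contains a factor $(z^2)^{-h^y_0}=z^{-2h^y_0}$ through its $e^{-\alpha}z^{-\partial_\alpha}$ tail (and symmetrically for $\gamma$). To show these cancel I would appeal to \eqref{eqn:Hbetacharge} and \eqref{eqn:Hgammacharge}: since $H^{\beta}$ lowers the $h^x_0$-charge by one and $H^{\gamma}$ raises it by one, in exactly the way $e^{-\alpha}$ and $e^{\alpha}$ shift the lattice charge $h^y_0=\partial_\alpha$, the two charge gradings are identified on the image of $\mathit{F_\chi}$, so that $h^x_0$ and $h^y_0$ act as the same (total charge) operator there. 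Because $V^{\pm}$ are built from $h_{\pm n}$ with $n\neq 0$ and hence commute with $h^x_0$, one may bring $z^{2h^x_0}$ and $z^{-2h^y_0}$ together and observe that $z^{2(h^x_0-h^y_0)}=1$; the charge shift produced by $e^{-\alpha}$ is common to both gradings, so the net power of $z$ still cancels. This is the crux of eliminating all bare $z^{\pm 2h_0}$ operators and arriving at the closed forms; one must also verify that the residual normal ordering in $\partial_z e^{\alpha}_y(z)$ reproduces the factor $h^y(z)$ in \eqref{eqn:finalcompl2}.

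For the final assertion, the Hirota-equation equivalence is a short parity computation. Because $\beta_\chi(z^2)$ and $\gamma_\chi(z^2)$ are even in $z$, we have $\chi(-z)=\gamma_\chi(z^2)-z\beta_\chi(z^2)$, so upon expanding $\chi(z)\otimes\chi(-z)$ the two diagonal terms $\gamma_\chi(z^2)\otimes\gamma_\chi(z^2)$ and $-z^2\,\beta_\chi(z^2)\otimes\beta_\chi(z^2)$ are even in $z$ and contribute nothing to $\Res_z$, leaving $\Res_z$ of $z\big(\beta_\chi(z^2)\otimes\gamma_\chi(z^2)-\gamma_\chi(z^2)\otimes\beta_\chi(z^2)\big)$. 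The substitution $z^2\mapsto z$ turns $\Res_z[z\,g(z^2)]$ into a nonzero multiple of $\Res_z[g(z)]$, so the original equation \eqref{eqn:Hirotaeqn} is equivalent to $\Res_z\big(\beta_\chi(z)\otimes\gamma_\chi(z)-\gamma_\chi(z)\otimes\beta_\chi(z)\big)=0$, as claimed.
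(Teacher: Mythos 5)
Your proposal is correct and takes essentially the same route as the paper's Section 4: inverting the definitions of $H^{\beta}(z^2)$ and $H^{\gamma}(z^2)$ to obtain \eqref{eqn:beta-gamma-bos}, realizing $V^{\pm}(z)$ on the $x$-variables via \eqref{eqn:Heisendiff-x}, substituting the FMS/lattice realizations \eqref{eqn:ident1}--\eqref{eqn:ident2}, identifying the two charge gradings so that the $z^{\pm 2h^x_0}$ and $z^{\mp 2h^y_0}$ factors cancel, and rescaling $z^2\mapsto z$. Your explicit justification of the charge cancellation via \eqref{eqn:Hbetacharge}--\eqref{eqn:Hgammacharge} and the parity computation reducing \eqref{eqn:Hirotaeqn} to $\Res_z\big(\beta_\chi(z)\otimes\gamma_\chi(z)-\gamma_\chi(z)\otimes\beta_\chi(z)\big)=0$ simply spell out steps the paper asserts with less detail.
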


\section{Outlook}
In this paper we completed the second bosonization of the Hirota equation for the CKP hierarchy. Here we did not discuss solutions, symmetries, complexification, nor  further applications, such as certain character formulas, vacuum expectation values equalities, etc. Also, the consequences of the existence of the two bosonizations (the one described here as well as the bosonization studied in \cite{OrlovLeur}), need to be addressed, as well as the comparison between the Hirota equation and the reduction approach to the CKP hierarchy.  Each of these topics is  worth a separate discussion, which we will commence in a consequent paper.

\def\cprime{$'$}


\begin{thebibliography}{DJKM81b}

\bibitem[Abe07]{Abe}
Toshiyuki Abe.
\newblock A $\mathbb{Z}_2$-orbifold model of the symplectic fermionic vertex
  operator superalgebra.
\newblock {\em Mathematische Zeitschrift}, 255(4):755--792, 2007.

\bibitem[ACJ14]{ACJ}
Iana~I. Anguelova, Ben Cox, and Elizabeth Jurisich.
\newblock ${N}$-point locality for vertex operators: normal ordered products,
  operator product expansions, twisted vertex algebras.
\newblock {\em J. Pure Appl. Algebra}, 218(12):2165--2203, 2014.

\bibitem[AM08]{ADAMOVIC}
Drazen Adamovic and Antun Milas.
\newblock On the triplet vertex algebra ${W}(p)$.
\newblock {\em Advances in Mathematics}, 217(6):2664 -- 2699, 2008.

\bibitem[Ang13a]{Ang-Varna2}
Iana~I. Anguelova.
\newblock Boson-fermion correspondence of type {B} and twisted vertex algebras.
\newblock In Vladimir Dobrev, editor, {\em Lie Theory and Its Applications in
  Physics}, volume~36 of {\em Springer Proceedings in Mathematics and
  Statistics}, pages 399--410. Springer Japan, 2013.

\bibitem[Ang13b]{AngTVA}
Iana~I. Anguelova.
\newblock Twisted vertex algebras, bicharacter construction and boson-fermion
  correspondences.
\newblock {\em Journal of Mathematical Physics}, 54(12):38, 2013.

\bibitem[Ang15]{AngMLB}
Iana~I. Anguelova.
\newblock Multilocal bosonization.
\newblock {\em Journal of Mathematical Physics}, 56(12):13, 2015.

\bibitem[DJKM81a]{DJKM-KP}
Etsur{\=o} Date, Michio Jimbo, Masaki Kashiwara, and Tetsuji Miwa.
\newblock Transformation groups for soliton equations. {III}. {O}perator
  approach to the {K}adomtsev-{P}etviashvili equation.
\newblock {\em J. Phys. Soc. Japan}, 50(11):3806--3812, 1981.

\bibitem[DJKM81b]{DJKM6}
Etsur{\=o} Date, Michio Jimbo, Masaki Kashiwara, and Tetsuji Miwa.
\newblock Transformation groups for soliton equations. {VI}. {KP} hierarchies
  of orthogonal and symplectic type.
\newblock {\em J. Phys. Soc. Japan}, 50(11):3813--3818, 1981.

\bibitem[DJKM82]{DJKM-4}
Etsur{\=o} Date, Michio Jimbo, Masaki Kashiwara, and Tetsuji Miwa.
\newblock Transformation groups for soliton equations. {IV}. {A} new hierarchy
  of soliton equations of {KP}-type.
\newblock {\em Phys. D}, 4(3):343--365, 1981/82.

\bibitem[FLM88]{FLM}
Igor Frenkel, James Lepowsky, and Arne Meurman.
\newblock {\em Vertex operator algebras and the {M}onster}, volume 134 of {\em
  Pure and Applied Mathematics}.
\newblock Academic Press Inc., Boston, MA, 1988.

\bibitem[FMS86]{FMS}
Daniel Friedan, Emil Martinec, and Stephen Shenker.
\newblock Conformal invariance, supersymmetry and string theory.
\newblock {\em Nuclear Phys. B}, 271(1):93--165, 1986.

\bibitem[Kac98]{Kac}
Victor Kac.
\newblock {\em Vertex algebras for beginners}, volume~10 of {\em University
  Lecture Series}.
\newblock American Mathematical Society, Providence, RI, second edition, 1998.

\bibitem[Kau95]{Curiosities}
Horst~G. Kausch.
\newblock {Curiosities at $c=-2$}.
\newblock 1995.
\newblock https://arxiv.org/abs/hep-th/9510149.

\bibitem[Kau00]{SymplFermionsKausch}
Horst~G. Kausch.
\newblock Symplectic fermions.
\newblock {\em Nuclear Physics B}, 583(3):513 -- 541, 2000.

\bibitem[KR87]{KacRaina}
V.~G. Kac and A.~K. Raina.
\newblock {\em Bombay lectures on highest weight representations of
  infinite-dimensional {L}ie algebras}, volume~2 of {\em Advanced Series in
  Mathematical Physics}.
\newblock World Scientific Publishing Co. Inc., Teaneck, NJ, 1987.

\bibitem[KWY98]{WangKac}
Victor~G. Kac, Weiqiang Wang, and Catherine~H. Yan.
\newblock Quasifinite representations of classical {L}ie subalgebras of {$
  W_{1+\infty}$}.
\newblock {\em Adv. Math.}, 139(1):56--140, 1998.

\bibitem[MJD00]{Miwa-book}
T.~Miwa, M.~Jimbo, and E.~Date.
\newblock {\em Solitons: differential equations, symmetries and infinite
  dimensional algebras}.
\newblock Cambridge tracts in mathematics. Cambridge University Press, 2000.

\bibitem[vOS12]{OrlovLeur}
J.~W. {van de Leur}, A.~Y. {Orlov}, and T.~{Shiota}.
\newblock {CKP Hierarchy, Bosonic Tau Function and Bosonization Formulae}.
\newblock {\em SIGMA}, 8, 2012.
\newblock 28pp.

\bibitem[Wan98]{Wangbosonization}
Weiqiang Wang.
\newblock {$W_{1+\infty}$} algebra, {$W_3$} algebra, and
  {F}riedan-{M}artinec-{S}henker bosonization.
\newblock {\em Comm. Math. Phys.}, 195(1):95--111, 1998.

\bibitem[Wan99]{WangDuality}
Weiqiang Wang.
\newblock Duality in infinite-dimensional {F}ock representations.
\newblock {\em Commun. Contemp. Math.}, 1(2):155--199, 1999.

\bibitem[You89]{YouBKP}
Yuching You.
\newblock Polynomial solutions of the {BKP} hierarchy and projective
  representations of symmetric groups.
\newblock In {\em Infinite-dimensional {L}ie algebras and groups
  ({L}uminy-{M}arseille, 1988)}, volume~7 of {\em Adv. Ser. Math. Phys.}, pages
  449--464. World Sci. Publ., Teaneck, NJ, 1989.

\end{thebibliography}
 \end{document}